\theoremstyle{plain}%
\newtheorem{theorem}{Theorem}[section]
\newtheorem{proposition}[theorem]{Proposition}%
\newtheorem{lemma}[theorem]{Lemma}
\newtheorem{corollary}[theorem]{Corollary}
\theoremstyle{definition}%
\newtheorem{definition}{Definition}%
\newtheorem{example}{Example}%
\theoremstyle{remark}%
\newtheorem{remark}{Remark}%
\theoremstyle{claim}%
\newtheorem{claim}[theorem]{Claim}%
\date{October 13, 2024}
\begin{document}

\begin{frontmatter}


\title{Graphon games and an idealized limit of large network games}


\author[inst1]{Motoki Otsuka}

\affiliation[inst1]{organization={University of Tsukuba},
            addressline={1-1-1}, 
            city={Tennodai},
            postcode={305-8577}, 
            state={Ibaraki},
            country={Japan}}



\begin{abstract}
\textit{Graphon games} are a class of games with a continuum of agents, introduced to approximate the strategic interactions in large network games. The first result of this study is an equilibrium existence theorem in graphon games, under the same conditions as those in network games. We prove the existence of an equilibrium in a graphon game with an infinite-dimensional strategy space, under the continuity and quasi-concavity of the utility functions. The second result characterizes Nash equilibria in graphon games as the limit points of asymptotic Nash equilibria in large network games. If a sequence of large network games converges to a graphon game, any convergent sequence of asymptotic Nash equilibria in these large network games also converges to a Nash equilibrium of the graphon game. In addition, for any graphon game and its equilibrium, there exists a sequence of large network games that converges to the graphon game and has asymptotic Nash equilibria converging to the equilibrium. These results suggest that the concept of a graphon game is an idealized limit of large network games as the number of players tends to infinity.
\end{abstract}



\begin{keyword}
Graphon game \sep Network game \sep Continuum of agents \sep Existence of equilibria \sep Idealized limit property
\JEL C02, C62, C72
\end{keyword}

\end{frontmatter}
\textbf{Data availability}: We do not analyse or generate any datasets, because our work proceeds within a theoretical and mathematical approach.\\
\textbf{Declarations of interest}: none.

\section{Introduction}
\label{sec:sample1}
Network games have emerged as an important area in economics and game theory. These games analyze the interaction of individuals who are connected through a network and whose behaviors are influenced by those around them. These games are beneficial for studying peer effects in various contexts, such as education choices, criminal activities, or the adoption of new technologies (for a survey, see \cite{RefWorks:RefID:159-jackson2015chapter} and \cite{10.1093/oxfordhb/9780199948277.013.8}). However, there are some practical challenges. One significant challenge is that real-world social networks are often extremely large, making it difficult for researchers to obtain accurate information about the network structure. Collecting detailed network data can be costly and is not always feasible due to privacy concerns. Another challenge arises from the size of the social network itself. When analyzing these networks based on the concept of Nash equilibrium, researchers are required to solve high-dimensional optimization problems, which becomes increasingly difficult as the network size grows.

To address these issues, \cite{parise2023} introduce the concept of a \textit{graphon game}. A graphon is a measurable function $W:[0,1]^2 \to [0,1]$ and is introduced as the limits of convergent sequences of networks as the number of nodes tends to infinity.\footnote{Intuitively, a sequence of networks $\{G_n\}$ is said to converge if, for any fixed network $F$, the proportion of $F$ in $G_n$ converges. The concept of graphons was introduced by \cite{RefWorks:RefID:409-lovász2006limits} and further developed by \cite{RefWorks:RefID:410-borgs2008convergent, RefWorks:RefID:411-borgs2012convergent}. The important results are well summarized in \cite{RefWorks:RefID:412-lovász2012large} and \cite{RefWorks:RefID:413-janson2013graphons}. The term "graphon" is short for "graph function."} A graphon can also be interpreted as a stochastic model for generating networks: independently and uniformly selecting a finite number of points $x_1, x_2, \cdots ,x_n$ from the interval $[0,1]$ and connecting $x_i$ and $x_j$ with probability $W(x_i,x_j)$.
In a graphon game, the space of agents is represented by the Lebesgue interval, and the graphon describes the connection strength between any pair of agents. Each agent’s payoff depends on their own action and a weighted average of other agents’ actions. These weights are heterogeneous and are specified by the graphon.

\cite{parise2023} make two main contributions to graphon games, among others. First, they prove the existence of a Nash equilibrium in graphon games under specific conditions on the utility functions, including continuous differentiability, strict concavity, and Lipschitz continuity. Second, they investigate large network games where agents interact according to a finite network sampled from a graphon and show that, under certain assumptions, the equilibria in such large sampled network games can be closely approximated by the unique equilibrium of the corresponding graphon game. This result is significant because graphon games generally involve lower-dimensional optimization problems compared to large network games. In summary, by interpreting graphons as stochastic models, \cite{parise2023} present a new framework for analyzing large network games.

In contrast, interpreting graphons as limiting objects allows us to view graphon games as the limits of network games as the number of players tends to infinity. This raises two natural questions:
\begin{enumerate}
    \item Can we establish the existence of an equilibrium in graphon games under the same conditions as in network games?
    In network games, it is well-known that a Nash equilibrium exists if (i) the strategy set is nonempty, convex, and compact, and (ii) the utility functions are continuous and quasi-concave. Moreover, the strategy space is allowed to be infinite-dimensional. In contrast, the conditions used by \cite{parise2023}, such as continuous differentiability and strict concavity, are stronger, and their analysis focuses on a finite-dimensional strategy space.
    \item Are equilibria of graphon games realizable as the “limits” of equilibria of large network games?
    That is, given a graphon game and an equilibrium of this game, does a large network game “similar” to the graphon game have an equilibrium “similar” to the equilibrium of the graphon game? 
\end{enumerate}
These two questions can be summarized into one central question: How ideal is the concept of a graphon game as a limiting object of network games as the number of players tends to infinity? We aim to address these questions.

The first result of this study is a unified equilibrium existence theorem for both network and graphon games. To address both types of games simultaneously, we consider graphon games with a measure space of agents and prove an equilibrium existence theorem under the assumptions of continuity, quasi-concavity of the utility functions, and an infinite-dimensional strategy space. This theorem includes, as corollaries, the equilibrium existence results for network games and graphon games with a continuum of agents, under the same assumptions on the utility function and strategy space. Thus, this result establishes the existence of an equilibrium in graphon games under the same conditions as in network games.

The second result characterizes Nash equilibria in graphon games as the limit points of asymptotic Nash equilibria in large network games. First, we prove that any graphon game can be viewed as the limit of a sequence of large network games. Next, we prove that if a sequence of large network games converges to a graphon game, any convergent sequence of asymptotic Nash equilibria in these large network games also converges to a Nash equilibrium of the limiting graphon game. Finally, for any graphon game and its equilibrium, there exists a sequence of large network games that converges to the graphon game and has asymptotic equilibria converging to the equilibrium. These results demonstrate that Nash equilibria in graphon games are the limit points of asymptotic Nash equilibria in large network games. Together with the first result, graphon games are, in this sense, ideal as limits of large network games.

This paper is organized as follows. In Section \ref{sec:literature}, we give a review of the related literature. Section \ref{sec:Notations and definitions} introduces the notations and definitions. We present our existence results in Section \ref{sec: Existence results} and the characterization result in Section \ref{sec:Characterization}. Section \ref{sec:concluding remarks} contains the concluding remarks. The omitted proofs are collected in the Appendix.

\section{Related literature}\label{sec:literature}
The study of games with a continuum of agents, often referred to as large games or non-atomic games, was initiated by \cite{RefWorks:RefID:419-schmeidler1973equilibrium} and \cite{RefWorks:RefID:402-mas-colell1984theorem}, the former in its individualistic form, and the latter in its distributional setting.
Traditional large games assumed that a player’s payoff depends on societal actions and their own actions, and thus did not account for players’ biological or social traits in understanding player interdependence. 
To address this issue in large games, \cite{RefWorks:RefID:273-khan2013large} introduce large games with traits, where a player's payoff is influenced by their own actions and social responses, which consider not only a summary of societal actions but also a summary of traits (see also \cite{RefWorks:RefID:424-khan2013large}, \cite{RefWorks:RefID:281-qiao2014space}, \cite{RefWorks:RefID:279-qiao2016closed-graph}, \cite{RefWorks:RefID:414-he2017modeling}, \cite{RefWorks:RefID:393-wu2022pure-strategy}, and \cite{RefWorks:RefID:70-he2022conditional}). Similarly, graphon games can model players' biological or social traits through the use of graphons. The main difference between these two concepts lies in the externalities experienced by players. In large games with traits, each player has a biological or social trait, and their payoff depends on the joint distribution of traits and actions. In contrast, graphon games use heterogeneous weights to calculate a weighted average of other agents' actions, so the externality each player faces varies from player to player. Thus, graphon games are better suited for analyzing the impact of social networks on game outcomes. In fact, we provide a method to construct Nash equilibria of graphon games with linear quadratic utilities from their graphons, as this is one of the most significant utility functions in the network game literature.

Since games with a continuum of agents are used as approximations of finite-player games, much research has focused on studying the relationship between large games and finite-player games. One way to relate large finite-player games and non-atomic games is to investigate whether the limit of a convergent sequence of equilibria in finite-player games is an equilibrium in the limiting non-atomic game. This issue is addressed by \cite{Green1984continuum}, \cite{RefWorks:RefID:28-keisler2009saturated}, \cite{RefWorks:RefID:281-qiao2014space}, \cite{RefWorks:RefID:279-qiao2016closed-graph}, \cite{RefWorks:RefID:414-he2017modeling}, and \cite{RefWorks:RefID:393-wu2022pure-strategy}. Another approach, known as \textit{asymptotic implementation}, starts with an equilibrium of the non-atomic game and asks whether sufficiently large finite-player games close to the non-atomic game have equilibria that are also close to that equilibrium. This approach is closely related to the characterization result in this study.

\cite{Hausman1988infinite} demonstrates the asymptotic implementation in convex games—defined by convex action sets and quasi-concave payoff functions—through approximate equilibria in large finite-player games.
Recently, \cite{RefWorks:RefID:122-carmona2020pure, RefWorks:RefID:394-carmona2021strict} have proved the asymptotic implementation in terms of exact Nash equilibria in large finite-player games without requiring convexity assumptions.
One limitation of \cite{RefWorks:RefID:122-carmona2020pure, RefWorks:RefID:394-carmona2021strict} is that they only establish that for any non-atomic game and its equilibrium, there exists at least one sequence of finite-player games converging to that game, with equilibria converging to the equilibrium. Motivated by this limitation, \cite{RefWorks:RefID:79-carmona2022approximation} characterize Nash equilibria in non-atomic games in terms of approximate equilibria in large finite-player games. Their results show that all sequences of finite-player games converging to a given non-atomic game have approximate equilibria that converge to a given equilibrium in the non-atomic game. The characterization result in this study shares the same motivation as that in \cite{RefWorks:RefID:79-carmona2022approximation}. In fact, this result can be seen as an extension of their characterization result to graphon games.

The most closely related work to this study is by \cite{Rokade2023}, which provides two key results: an equilibrium existence theorem under mild conditions and an approximation theorem for Nash equilibria. Like our study, \cite{Rokade2023} prove an equilibrium existence theorem under the assumptions of continuity and quasi-concavity of the utility functions. However, their equilibrium existence theorem addresses only the one-dimensional strategy space. In contrast, our existence theorem extends to infinite-dimensional cases. Moreover, our formulations and the techniques used in our proof are more familiar within the game theory literature, and our results do not require the strategy sets to be uniformly compact.

\cite{Rokade2023} demonstrate that if a sequence of Nash equilibria of sampled network games converges to a strategy profile of the corresponding graphon game as the number of players tends to infinity, this strategy profile is a Nash equilibrium of the graphon game with probability 1. However, their result leaves open the possibility that the limiting strategy profile may not be a Nash equilibrium of the graphon game if the sampling is biased.
Our study shows that when large network games—not necessarily sampled—converge to a graphon game, the limit of a sequence of asymptotic Nash equilibria in the large network games is \textit{always} a Nash equilibrium of the graphon game (see Corollary \ref{coro:limit is Nash:network}).

They also prove an approximate converse of their result. For any graphon game and its Nash equilibrium, they show that there exists at least one sequence of its sampled network games with approximate Nash equilibria that converge to the Nash equilibrium (see Theorem 3 and Remark 3 in \cite{Rokade2023}). In contrast, we prove that for any graphon game and its Nash equilibrium, \textit{every} sequence of network games converging to the graphon game has a sequence of asymptotic Nash equilibria that converges to the Nash equilibrium (see 1 \(\implies\) 3 in Theorem \ref{theorem:charactarization}).

In summary, the main difference between our study and that of \cite{Rokade2023} lies in the approaches used to study the relationship between graphon games and network games. While \cite{Rokade2023} adopt a stochastic interpretation of graphons and use statistical methods to analyze the relationship between graphon games and their sampled network games, our study takes a game-theoretic approach to analyze the relationship between graphon games and general network games. This game-theoretic approach allows us to characterize Nash equilibria in graphon games in relation to network games.

\section{Notations and definitions}
\label{sec:Notations and definitions}
\subsection{Preliminaries}\label{subsec: Mathematical preliminaries}
Let $(T,\Sigma,\mu)$ be a finite measure space and $E$ a Banach space. $(T,\Sigma,\mu)$ is \textit{essentially countably generated} if $\Sigma$ is generated by a countable subset of $\Sigma$ together with the $\mu$-null sets. Denote by $E^*$ the dual space of $E$, i.e. the space of bounded linear functionals from $E$ to $\mathbf{R}.$
A function $f:T \to E$ is \textit{strongly measurable} if there exists a sequence $\{\phi_n\}$ of simple functions such that $\mathrm{lim}_{n\to \infty} ||f(t) - \phi_n(t)|| = 0$ for a.e. $t \in T.$
The function $f$ is \textit{weakly measurable} if $x^* \circ f:T \to \mathbf{R}$ is measureble for all $x^* \in E^*.$ The function $f$ is \textit{essentially separably valued} if there exists a null set $N \in \Sigma$ such that $f(T \backslash N)$ is separable in $E$. Pettis' measurability theorem states that $f$ is strongly measurable if and only if $f$ is weakly measurable and essentially separably valued \citep[Theorem 2, p.42]{RefWorks:RefID:16-diestel1977vector}.
Therefore, when \(E\) is separable, the weak, strong, and usual measurability conditions are equivalent.


A strongly measurable function $f:T \to E$ is \textit{Bochner integrable} if there exists a sequence $\{\phi_n\}$ of simple functions such that 
\[
    \lim_{n\to \infty} \int_T ||f-\phi_n||d\mu =0.
\]
In this case, for each $S \in \Sigma$ the Bochner integral of $f$ over $S$ is defined by 
\[
\int_S f d \mu = \lim_{n\to \infty} \int_S \phi_nd\mu,
\]
    where the last limit is in the norm topology on $E.$
Denote by $L^1(\mu,E)$ the space of the equivalence classes of Bochner integrable functions $f:T\to E.$ The space $L^1(\mu, E)$ is a Banach space for the norm $||\cdot||_1,$ where  $||f||_1 = (\int_T ||f|| d\mu)$. Denote by $L^\infty(\mu,E)$ the space of essentially bounded functions, normed by the usual essential supremum norm $||\cdot||_\infty$.
We simplify the notation $L^1(\mu,\mathbf{R})$ and $L^\infty(\mu,\mathbf{R})$ to $L^1(\mu)$ and $L^\infty(\mu)$, respectively.

A mapping from $T$ to the family of subsets of $E$ is called a \textit{multifunction} or \textit{correspondence}. 
A multifunction $\Gamma : T \to E$ is said to be \textit{measurable} if the set $\{t \in  T| \Gamma(t) \cap U \not = \emptyset \}$ is in $\Sigma$ for every open subset $U$ of $E$. 
It is \textit{graph measurable} if its graph $ G_\Gamma = \{(t,x)\in T\times E| x \in \Gamma(t)\}$ belongs to $\Sigma \otimes \mathcal{B}(E)$, where $\mathcal{B}(E)$ is the Borel $\sigma$-algebra of $E$ generated by the norm topology. For nonempty closed valued multifunctions, measurability and graph measurability coincide whenever $(T,\Sigma, \mu)$ is complete and $E$ is separable. A function $f:T \to E$ is a \textit{selection} of $\Gamma$ if $f(t) \in \Gamma(t)$ for a.e. $t \in T.$ If $E$ is separable, then a nonempty multifunction $\Gamma$ with a measurable graph admits a measurable selection \citep[Theorem 1, p.54]{Hildenbrand1977}.

Let $B$ be a closed unit ball in $E$. A multifunction $\Gamma:T\to E$ is \textit{integrably bounded} if there exists $\phi \in L^1(\mu)$ such that $\Gamma(t) \subset \phi(t)B$ for a.e. $t \in T$. Denote by $\mathcal{S}^1_{\Gamma}$ the set of Bochner integrable selections of $\Gamma.$ If a nonempty multifunction $\Gamma$ is graph measurable and integrably bounded, then it admits a Bochner integrable selection whenever $E$ is separable.

\subsection{Network games}
Our focus is solely on network games \textit{with local aggregates}, which we will simply refer to as network games. In a network game, the network structure is described by an adjacency matrix $A = (A_{ij})_{i,j \in I}$, where $A_{ij}$ denotes the level of influence from agent $j$ to agent $i$. We allow for directed networks, which means that the matrix is not necessarily symmetric. When a strategy profile prevails, each agent aims to maximize their payoff, taking into account the externality defined by a local aggregate. The local aggregate is defined as the weighted sum of the strategy profile, with weights specified by the adjacency matrix $A$. 

The strategy space is a separable Banach space $E$.
\begin{definition}
    A network game $\mathcal{G}$ is a tuple $(I, (A_{i,j})_{i,j \in I}, (S_i)_{i \in I},(v_i)_{i \in I})$, where
    \begin{itemize}
        \item $I = \{1,2, \cdots n \}$ is the set of players;
        \item $(A_{ij})_{i,j \in I}$ is the adjacency matrix;
        \item $S_i \subset E$ is the strategy set of agent $i$;
        \item $v_i: S_i \times E \to \mathbf{R}$ is the utility function of agent $i$, and $v_i(a,e)$ is the utility of agent $i$ when he/she chooses a strategy $a \in S_i$ under a local aggregate $e \in E$.
        
    \end{itemize}
\end{definition}

\begin{definition}
    Let $\mathcal{G}$ be a network game.
        \begin{itemize}
            \item A strategy profile is an element of $\Pi_{i \in I} S_i.$
            \item Nash equilibrium for $\mathcal{G}$ is a strategy profile $s \in \Pi_{i \in I} S_i$ such that 
            $$v_i\left(s_i, \frac{1}{|I|}\sum_{j \in I} A_{ij} s_j \right) = \mathrm{max}_{a \in S_i} v_i\left(a ,\frac{1}{|I|}\sum_{j \in I} A_{ij} s_j \right) \text{ for all } i \in I.\footnote{$|A|$ denotes the cardinality of a set $A$.}$$
        \end{itemize}
    \end{definition}

\subsection{Graphon games}
\cite{parise2023} introduce graphon games, which extend network games to a continuum of agents. In a graphon game, the graphon quantifies the level of interaction between two agents in the population. We do not assume symmetry for the graphon. When a strategy profile prevails, each agent seeks to maximize their payoff by considering the externality described by a local aggregate. The local aggregate is defined as the weighted Bochner integral of the strategy profile, with the weights determined by the graphon.

To address network games and graphon games simultaneously, we define graphon games with a measure space of agents.
\begin{definition}
    A graphon game $\mathcal{G}$ with a measure space of agents is a tuple $((T, \Sigma, \mu), W, S, U)$, where
    \begin{itemize}
        \item $(T, \Sigma, \mu )$ is a complete, finite measure space of agents;
        \item $W$ is a \textit{graphon} on \(T\), that is, a measurable function from $T \times T$ to $[0,1]$;
        \item $S$ is a correspondence from $T$ to $E$, and $S(t)$ is the strategy set of agent $t$;
        \item $U: G_S \times E \to \mathbf{R}$ is a utility function,\footnote{\(G_S\) denotes the graph of the correspondence \(S\). See Section \ref{subsec: Mathematical preliminaries}} and $U(t,a,e)$ is the utility of agent $t$ when he/she chooses a strategy $a \in S(t)$ under a local aggregation $e \in E.$

    \end{itemize}
    When the Lebesgue interval serves as the space of agents, the game is referred to as a graphon game with a continuum of agents.
    \end{definition}

\begin{definition}
    Let $\mathcal{G}$ be a graphon game with a measure space of agents.
        \begin{itemize}
            \item A strategy profile is a Bochner integrable function $f: T \to E$ such that $f(t) \in S(t)$ for a.e $t \in T.$
            \item Nash equilibrium for $\mathcal{G}$ is a strategy profile $f$ such that 
            \[
                U(t, f(t), \int_T W(t,s) f(s) d \mu (s))
                = \mathrm{max}_{a \in S(t)} U(t, a ,\int_T W(t,s) f(s) d \mu (s)) \ \text{for a.e.}\  t \in T.
            \]  
        \end{itemize}
    \end{definition}
    \begin{remark}
        The integral $\int W(t,s) f(s) d \mu (s)$ is always defined for any strategy profile $f$ and $t \in T.$ In fact, Let $\phi (s) = W(t,s) f(s).$ For all $x^* \in E^*,$ the function $x^* \circ \phi$ is $\Sigma-\mathcal{B}(\mathbf{R})$ measurable because $f$ is weakly measurable and $x^* \circ \phi (s) = W(t,s) (x^* \circ f (s)).$ Since $E$ is separable, it follows from Pattis' measurability theorem that $\phi$ is strongly measurable. In addition, $\phi$ is Bohoner integlable because $\int ||\phi (s)|| d \mu (s) \leq \int ||f(s)|| d \mu(s) < \infty$
        \citep[see][Theorem 2, p.45]{RefWorks:RefID:16-diestel1977vector}.
    \end{remark}
    \begin{remark}
        Compared to the literature on large games, it is natural to use a separable Banach space as the infinite-dimensional strategy space in graphon games. In this literature, compact metric spaces are commonly used as the strategy space (see e.g., \cite{RefWorks:RefID:402-mas-colell1984theorem}, \cite{Hausman1988infinite}, and \cite{ RefWorks:RefID:122-carmona2020pure,RefWorks:RefID:394-carmona2021strict,RefWorks:RefID:79-carmona2022approximation}). Therefore, the space is a separable and complete metric space. In graphon games, a linear structure on the strategy space is necessary to calculate externalities. Hence, it is reasonable to adopt a separable Banach space for the strategy space and employ Bochner integration for calculating externalities. More general strategy spaces with weaker notions of integration, such as Pettis integration, are left for future research.
    \end{remark}
    
\section{Existence results}\label{sec: Existence results}
To establish the existence of pure-strategy equilibrium for graphon games, we consider the following assumptions. These assumptions are all standard in the game theory literature. 
\begin{enumerate}[\text{A.}1]
    \item\label{assum:essentially countably generated} \((T, \Sigma, \mu)\) is essentially countably generated.
    \item \label{assum:strategy} $S(t)$ is nonempty, convex and weakly compact for all $t.$
    \item The correspondence \(S\) is graph measurable and integrably bounded.
    \item\label{assum:continuity} $U(t, \cdot, \cdot): S(t) \times E \to \mathbf{R}$ is continuous for every $t \in T,$ where $S(t)$ and $E$ are endowed with the weak topology.
    \item \label{assum:quasi} $U(t, \cdot, e): S(t) \to \mathbf{R}$ is quasi-concave for every $t \in T$ and $e \in E$.
    \item \label{assum:measurability} $U(\cdot, \cdot, e):G_S \to \mathbf{R}$ is $\Sigma \otimes \mathcal{B}(E)- \mathcal{B}(\mathbf{R})$ measurable for every $e \in E$.
\end{enumerate}


We are now ready to provide the main result of this study.
\begin{theorem}\label{the:main}
        Let $\mathcal{G}$ be a graphon game with a measure space of agents. If $\mathcal{G}$ satisfies A.\ref{assum:essentially countably generated}-A.\ref{assum:measurability}, then it has a Nash equilibrium.
\end{theorem}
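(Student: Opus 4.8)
The plan is to establish existence via a standard fixed-point argument on the best-response correspondence, combined with a selection theorem to handle the measure space of agents. Since each agent's problem is a parametrized optimization over a weakly compact strategy set with a weakly continuous, quasi-concave objective, the natural framework is the Kakutani--Fan--Glicksberg fixed-point theorem applied in the space of strategy profiles $L^1(\mu,E)$, or rather in a suitable weakly compact convex subset thereof.

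\paragraph{Step 1: The domain of profiles.} First I would let $\mathcal{K} = \mathcal{S}^1_S$ be the set of Bochner integrable selections of $S$. By A.\ref{assum:strategy} and the integrable boundedness in A.3, together with separability of $E$, this set is nonempty (by the measurable selection result quoted in the preliminaries) and is a convex subset of $L^1(\mu,E)$. The crucial point will be to equip $\mathcal{K}$ with a topology making it compact. Because each $S(t)$ is only weakly compact (not norm compact), I expect to work with the weak topology on $L^1(\mu,E)$; here the natural tool is a Diestel--James--Dunford-type weak compactness criterion for subsets of $L^1(\mu,E)$, showing that $\mathcal{S}^1_S$ is weakly compact precisely because the values $S(t)$ are uniformly integrable (integrably bounded) and weakly compact. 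This is the step I expect to be the main obstacle: verifying weak compactness of $\mathcal{S}^1_S$ in the infinite-dimensional, vector-valued setting requires care, and is exactly where the hypotheses A.1--A.3 (essential countable generation, weak compactness, integrable boundedness) are doing real work.

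\paragraph{Step 2: The local aggregate map is continuous.} Next I would define, for each profile $f \in \mathcal{K}$, the aggregate $e_f(t) = \int_T W(t,s) f(s)\,d\mu(s)$, which is well-defined by the Remark. I would show the map $f \mapsto e_f(t)$ is continuous for the weak topology: if $f_\alpha \rightharpoonup f$ weakly in $L^1(\mu,E)$, then for each fixed $t$ and each $x^* \in E^*$ we have $x^*(e_{f_\alpha}(t)) = \int_T W(t,s)\,x^*(f_\alpha(s))\,d\mu(s) \to x^*(e_f(t))$, since $s \mapsto W(t,s) x^*(\cdot)$ acts as a bounded linear functional on $L^1(\mu,E)$, i.e.\ an element of $L^\infty(\mu,E^*)$. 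Hence $e_{f_\alpha}(t) \rightharpoonup e_f(t)$ weakly in $E$ for each $t$.

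\paragraph{Step 3: Best-response correspondence and fixed point.} Then I would define the best-response multifunction $\Phi : \mathcal{K} \to \mathcal{K}$ by
\[
\Phi(f) = \Bigl\{ g \in \mathcal{K} \ \Big|\ g(t) \in \arg\max_{a \in S(t)} U\bigl(t, a, e_f(t)\bigr)\ \text{for a.e.}\ t \Bigr\}.
\]
For each fixed $f$, the pointwise argmax set is nonempty (weak continuity A.\ref{assum:continuity} plus weak compactness A.\ref{assum:strategy}), convex (quasi-concavity A.\ref{assum:quasi}), and the correspondence $t \mapsto \arg\max_{a\in S(t)} U(t,a,e_f(t))$ is graph measurable (using the measurability A.\ref{assum:measurability} and a Carath\'eodory/measurable-maximum theorem); hence it admits a Bochner integrable selection, so $\Phi(f)$ is nonempty and convex. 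To invoke Kakutani--Fan--Glicksberg I must verify $\Phi$ has weakly closed graph. This combines Step 2 (weak continuity of $f \mapsto e_f$) with the joint weak upper semicontinuity of the value function and a Berge-type maximum theorem, passing to the limit inside the argmax; the vector-valued, almost-everywhere nature of the conditions means I would likely extract a.e.-convergent subsequences or use a closed-graph result for integral selections (a Fatou-type lemma in infinite dimensions). A fixed point $f^* \in \Phi(f^*)$ is then exactly a profile with $f^*(t) \in \arg\max_{a\in S(t)} U(t,a,e_{f^*}(t))$ a.e., which is a Nash equilibrium. The essential countable generation A.\ref{assum:essentially countably generated} is what guarantees the relevant $L^1$ space is well-behaved (e.g.\ separability of the predual / applicability of the selection and compactness theorems) so that the closed-graph verification and the selection arguments go through.
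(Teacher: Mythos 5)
Your proposal follows essentially the same route as the paper's proof: a Kakutani--Fan--Glicksberg fixed point argument for the best-response correspondence $\phi(g) = \{f \in \mathcal{S}^1_S : f(t) \in \mathrm{argmax}_{a \in S(t)} U(t,a,\mathbb{W}(t,g)) \text{ a.e.}\}$ on the set of integrable selections, with weak compactness of $\mathcal{S}^1_S$ obtained from Diestel's theorem, nonemptiness from Aumann's measurable selection theorem, convexity from quasi-concavity, and A.\ref{assum:essentially countably generated} used to make $L^1(\mu,E)$ separable so that the weak topology on $\mathcal{S}^1_S$ is metrizable. The closed-graph step you flag as delicate is handled in the paper exactly along the lines you anticipate: either by a Mazur-type convex-combination/a.e.-subsequence argument combined with Hahn--Banach separation, or (as the paper's remark notes) by Yannelis's weak upper limit result, which is the ``Fatou-type lemma in infinite dimensions'' you mention.
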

\begin{proof}
    See Section \ref{sec:proof of existence theorem}.
\end{proof}
As corollaries of Theorem \ref{the:main}, we can derive the equilibrium existence theorems for both network games and graphon games with a continuum of agents.
\begin{corollary}
Let $\mathcal{G} = (I, (A_{i,j})_{i,j \in I}, (S_i)_{i \in I},(v_i)_{i \in I})$ be a network game. Assume the following:
    \begin{enumerate}
        \item $S_i$ is nonempty, convex, and weakly compact for every $i \in I$.
        \item $v_i$ is continuous and $v_i(\cdot,e): A_i \to \mathbf{R}$ is quasi-concave for every $i \in I$.
    \end{enumerate}
    Then, $\mathcal{G}$ has a Nash equilibrium.
\end{corollary}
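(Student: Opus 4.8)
The plan is to derive this corollary by exhibiting any network game as a special case of a graphon game with a measure space of agents and then invoking Theorem \ref{the:main}. Concretely, I would take the finite set of players $I = \{1, \dots, n\}$ as the underlying space, equip it with the power-set $\sigma$-algebra $\Sigma = 2^I$ and the normalized counting measure $\mu$ assigning mass $1/n$ (or unnormalized mass $1$) to each singleton; this is trivially a complete, finite measure space. I would then translate the data of $\mathcal{G}$ into graphon-game data: define the graphon $W \colon I \times I \to [0,1]$ by $W(i,j) = A_{ij}$ (possibly after rescaling the $A_{ij}$ into $[0,1]$, which requires a word of care, see below), set the strategy correspondence $S(i) = S_i$, and define the utility $U(i, a, e) = v_i(a, e)$ on $G_S \times E$. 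Since the space of agents is finite and discrete, a Bochner integral against $\mu$ reduces to a finite sum, so the local aggregate $\int_I W(i,s) f(s)\, d\mu(s)$ becomes exactly $\frac{1}{n} \sum_{j \in I} A_{ij} f(j)$, matching the network-game aggregate; hence a strategy profile and a Nash equilibrium of the associated graphon game coincide with those of $\mathcal{G}$.

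The core of the argument is then to verify that the two hypotheses of the corollary, together with the finiteness of $I$, imply assumptions A.\ref{assum:essentially countably generated}--A.\ref{assum:measurability}. I would check these in turn. Assumption A.\ref{assum:essentially countably generated} is immediate since $\Sigma = 2^I$ is finite, hence (essentially) countably generated. Assumption A.\ref{assum:strategy} is precisely hypothesis (1). For the measurability and integrable boundedness in A.3, I would note that any function or correspondence on a finite discrete space is automatically measurable and that the selections are bounded because each weakly compact $S_i$ is norm-bounded, so a constant dominating function $\phi$ can be chosen. Assumption A.\ref{assum:continuity} follows from the continuity of each $v_i$ in hypothesis (2), and A.\ref{assum:quasi} from the stated quasi-concavity; here I would read "$v_i$ is continuous" in the corollary as continuity in the weak topology, consistent with A.\ref{assum:continuity}. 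Finally, A.\ref{assum:measurability} holds trivially because, on the finite graph $G_S$, every real-valued function is $\Sigma \otimes \mathcal{B}(E)$-measurable.

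One genuine point requiring care, and the main obstacle, is the codomain constraint on the graphon: Theorem \ref{the:main} requires $W$ to take values in $[0,1]$, whereas the adjacency weights $A_{ij}$ in a network game need not lie in $[0,1]$. I would handle this by absorbing the scale of $A$ into the utility function. Specifically, if the entries $A_{ij}$ are uniformly bounded by some constant $M \geq 1$, I would define the rescaled graphon $W(i,j) = A_{ij}/M \in [0,1]$ and the rescaled utility $\tilde U(i, a, e) = v_i(a, M e)$; then the graphon aggregate computed with $W$, when fed into $\tilde U$, reproduces $v_i$ evaluated at the original aggregate $\frac{1}{n}\sum_j A_{ij} f(j)$. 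The rescaled utility $\tilde U$ inherits continuity, quasi-concavity (in $a$), and measurability from $v_i$ since $e \mapsto Me$ is a continuous linear isomorphism, so all of A.\ref{assum:continuity}--A.\ref{assum:measurability} are preserved. A Nash equilibrium of the resulting graphon game, obtained from Theorem \ref{the:main}, is then a Nash equilibrium of $\mathcal{G}$. The remaining steps are the routine bookkeeping of matching the two equilibrium conditions, which I would state but not belabor.
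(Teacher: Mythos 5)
Your proposal is correct and takes essentially the same route as the paper: both embed the network game as a graphon game on the finite discrete space $I$ with the power-set $\sigma$-algebra and uniform probability measure, set $S(t)=S_t$, $U(t,a,e)=v_t(a,e)$, $W(t,s)=A_{ts}$, verify A.\ref{assum:essentially countably generated}--A.\ref{assum:measurability} (which are immediate from finiteness and the two hypotheses), and invoke Theorem \ref{the:main}, noting that the Bochner integral against the uniform measure reduces to the network game's averaged aggregate $\frac{1}{|I|}\sum_j A_{ij}s_j$. Your additional rescaling step ($W(i,j)=A_{ij}/M$ with $\tilde U(i,a,e)=v_i(a,Me)$) is a sensible refinement: the paper's proof silently assumes $A_{ij}\in[0,1]$ so that $W(t,s)=A_{ts}$ is a legitimate graphon, whereas your argument covers nonnegative adjacency weights of arbitrary (finite) magnitude.
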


\begin{proof}
   Define a graphon game $\mathcal{G'} = ((T, \Sigma, \mu),W,S,U)$ with a measure space of agents as follows: 
\begin{itemize}
    \item $T = I,\  \Sigma = 2^T,$ and $\mu$ is the uniform probability measure on $(T, \Sigma).$
    \item $S(t) = S_t$ for all $t \in T.$
    \item $U(t, a, e) = v_t(a,e)$ for all $t\in T, a \in S(t)$, and $e \in E.$
    \item $W (t, s) = A_{ts}$ for all $t, s \in T.$ 
\end{itemize}
Then, it is straightforward to check A.\ref{assum:essentially countably generated}-A.\ref{assum:measurability} to hold. Thus, there exists a Nash equilibrium $s^*$ of $\mathcal{G'}.$ Since $\mu$ is the uniform probability measure on $(T, \Sigma),$ the strategy profile $s^*$ is also a Nash equilibrium of $\mathcal{G}.$ 
\end{proof}

\begin{corollary}\label{coro}
    Let $\mathcal{G}$ be a graphon game with a continuum of agents. If $\mathcal{G}$ satisfies A.\ref{assum:strategy}-A.\ref{assum:measurability}, then it has a Nash equilibrium.
\end{corollary}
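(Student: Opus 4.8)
The plan is to derive Corollary~\ref{coro} directly from Theorem~\ref{the:main} by verifying that a graphon game with a continuum of agents, satisfying A.\ref{assum:strategy}--A.\ref{assum:measurability}, is a special case of a graphon game with a measure space of agents satisfying A.\ref{assum:essentially countably generated}--A.\ref{assum:measurability}. The only gap between the hypotheses of the corollary and those of the theorem is A.\ref{assum:essentially countably generated}: the corollary omits the assumption that $(T,\Sigma,\mu)$ is essentially countably generated, because in the continuum-of-agents case the measure space is fixed to be the Lebesgue interval. So the entire content of the proof reduces to observing that the Lebesgue interval satisfies A.\ref{assum:essentially countably generated}, after which Theorem~\ref{the:main} applies verbatim.

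First I would recall that a graphon game with a continuum of agents is, by definition, a graphon game with a measure space of agents in which $(T,\Sigma,\mu)$ is the Lebesgue interval, i.e.\ $T=[0,1]$ with $\Sigma$ the Lebesgue $\sigma$-algebra and $\mu$ Lebesgue measure. This is a complete, finite measure space, so the structural requirement in the definition of a graphon game is met. The remaining assumptions A.\ref{assum:strategy}--A.\ref{assum:measurability} are assumed outright in the statement of the corollary and carry over unchanged.

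Next I would verify A.\ref{assum:essentially countably generated} for the Lebesgue interval. The Lebesgue $\sigma$-algebra on $[0,1]$ is generated by the Borel $\sigma$-algebra together with the Lebesgue-null sets, and the Borel $\sigma$-algebra on $[0,1]$ is countably generated (for instance by the collection of intervals with rational endpoints, or by a countable base of the topology). Hence $\Sigma$ is generated by a countable subcollection together with the $\mu$-null sets, which is precisely the statement that $(T,\Sigma,\mu)$ is essentially countably generated. This establishes A.\ref{assum:essentially countably generated}.

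With all of A.\ref{assum:essentially countably generated}--A.\ref{assum:measurability} in force, Theorem~\ref{the:main} yields a Nash equilibrium, completing the argument. There is no genuine obstacle here: the corollary is a straightforward specialization, and the only point requiring a remark is the elementary fact that the Lebesgue measure space is essentially countably generated. The proof is therefore a short verification rather than a substantive argument.
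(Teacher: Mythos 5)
Your proposal is correct and follows exactly the paper's own argument: the corollary is a direct specialization of Theorem~\ref{the:main}, with the only point needing verification being that the Lebesgue interval is complete, finite, and essentially countably generated. Your explicit justification of essential countable generation (Borel $\sigma$-algebra countably generated by rational-endpoint intervals, plus the null sets) is a slightly more detailed spelling-out of the fact the paper simply asserts.
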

\begin{proof}
    Since the Lebesgue interval is complete, finite, and essentially countably generated, the result immediately follows from Theorem \ref{the:main}.
\end{proof}
\cite{parise2023} impose strict concavity on the utility function, resulting in each agent's best response being single-valued.\footnote{\cite{Rokade2023} also point out this aspect.} Additionally, they assume a condition on the graphon's eigenvalues to make the best response a contraction, mainly focusing on graphon games with a unique equilibrium. In contrast, our assumptions allow for multi-valued best responses and multiple equilibria. We provide an example of graphon games with multi-valued best responses and multiple equilibria. In our example, we consider a graphon game with a variant of linear quadratic utility functions. When the network effect via the graphon is suitably small, the set of Nash equilibria has the cardinality of the continuum. We illustrate that some Nash equilibria can be directly constructed from the graphon.
\begin{example}
Consider the following measure space of agents, strategy space, strategy set, and utility function.
\begin{itemize}
\item The measure space is the Lebesgue Interval $(T, \Sigma, \mu).$
\item The strategy space is $\mathbf{R},$ and the strategy set of for each agent $t$ is $S(t) = [0,L] \ (L>0).$
\item Let $\lambda \geq 0$. The common utility function $u (a, e)$ is
$$u(a,e) = 
\begin{cases}
    -\frac{1}{2} a^2 + \lambda e a & (a < \lambda e)\\
    \frac{1}{2} \lambda^2 e^2 & (\lambda e \leq a \leq \lambda e + 1) \\
    -\frac{1}{2} (a-1)^2 + \lambda e (a-1) & (\lambda e + 1 < a).
\end{cases}$$
\end{itemize}
Then, the best response of agent $t$ under a local aggregate $e$ is multi-valued as follows:
$$B(t,e) = 
\begin{cases}
\{0 \} & \text{if } \lambda e + 1 < 0\\
\{L\}  & \text{if } L < \lambda e \\
[\lambda e, \lambda e + 1]\cap S(t) & \text{otherwise}
.\end{cases}
$$

For any graphon $W,$ it follows from Theorem \ref{the:main} that there exists a Nash equilibrium. Furthermore, if $W$ satisfies $ \lambda \|W\|_{\infty} < 1$, then there exist multiple equilibria. Specifically, the following proposition holds. Denote by $\mathcal{S}$ the set of all measurable functions $g:T \to \mathbf{R}$ such that $g(t) \in [0,1]$ for a.e. $t \in T.$ For a given graphon $W,$ we define $W_n : T \times T \to [0,1] \ (n =1, 2, \cdots )$ by induction as follows:
\begin{align*}
     W_1(t,s) &= W(t,s); \\
     W_n(t,s) &= \int_T W(t,x) W_{n-1}(x, s) dx\ \text{ for } n \geq 2.\\
\end{align*}

\begin{proposition}\label{proposition}
    Let $\mathcal{G} = ((T, \Sigma, \mu), W, S, U)$ be a graphon game, where $(T, \Sigma, \mu), S$ and $U$ are defined above. Suppose $\lambda || W ||_\infty < 1$ and $L$ is sufficiently large.\footnote{\(\lambda \|W\|_{\infty}\) represents how much each agent's utility is influenced by the choices of other agents through the network. As mentioned in the proof, we assume that \(L\) is sufficiently large such that
$\frac{1}{1-\lambda||W||_\infty} \leq L$
and
$\frac{\lambda }{1-\lambda||W||_\infty} + 1 \leq L.$
}
    Then, for all $g \in \mathcal{S}$, the measurable function $s_g \in L^1(\mu)$ defined as follows is a Nash equilibrium of $\mathcal{G}:$
    $$s_g(t) = g(t) + \lambda \int_T \Gamma (t,s,\lambda) g(s) dy, $$
    where $\Gamma (t,s,\lambda) = \sum_{n=1}^\infty \lambda^{n-1} W_n(t,s).$
 In addition the map $g \mapsto s_g$ is an injection, that is, if $g_1$ and $g_2$ differ as elements of $\mathcal{S},$ then $s_{g_1}$ and $s_{g_2}$ are different Nash equilibria. 
\end{proposition}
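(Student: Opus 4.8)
The plan is to verify directly that $s_g$ satisfies the fixed-point condition defining a Nash equilibrium, namely $s_g(t) \in B(t, e_g(t))$ for a.e.\ $t$, where $e_g(t) = \int_T W(t,s)\,s_g(s)\,d\mu(s)$ is the local aggregate induced by $s_g$, and then to read the same relation backwards to obtain injectivity. The computational heart of the argument is an identity showing that the local aggregate induced by $s_g$ is exactly $\int_T \Gamma(t,s,\lambda)g(s)\,d\mu(s)$, after which the piecewise definition of $B$ does the rest.

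First I would record the properties of the resolvent kernel $\Gamma$. Since $W$ takes values in $[0,1]$ and $\mu(T)=1$, an immediate induction gives $0 \le W_n(t,s) \le \|W\|_\infty^{\,n}$ a.e., so under $\lambda\|W\|_\infty < 1$ the series $\Gamma(t,s,\lambda)=\sum_{n\ge1}\lambda^{n-1}W_n(t,s)$ converges with the uniform bound $\Gamma \le \|W\|_\infty/(1-\lambda\|W\|_\infty)$; joint measurability of each $W_n$, hence of $\Gamma$, follows from Tonelli applied to the defining recursion. Interchanging sum and integral (legitimate by nonnegativity) and shifting the index, I obtain the one-sided resolvent identity
\[
\int_T W(t,s)\,\Gamma(s,r,\lambda)\,d\mu(s)=\tfrac{1}{\lambda}\bigl(\Gamma(t,r,\lambda)-W(t,r)\bigr).
\]
Writing $s_g = g + \lambda h_g$ with $h_g(t)=\int_T \Gamma(t,s,\lambda)g(s)\,d\mu(s)$, I then compute $e_g(t)$ by splitting into the $g$-contribution and the $h_g$-contribution; applying Fubini to the second term and substituting the identity above, the two stray copies of $\int_T W(t,r)g(r)\,d\mu(r)$ cancel, leaving the clean conclusion $e_g(t)=h_g(t)$, and therefore $s_g(t)=g(t)+\lambda e_g(t)$.

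The equilibrium verification is now bookkeeping with the bounds. Because $g(t)\in[0,1]$, the identity $s_g(t)=g(t)+\lambda e_g(t)$ places $s_g(t)\in[\lambda e_g(t),\lambda e_g(t)+1]$ automatically. Nonnegativity of $\Gamma$ and $g$ gives $e_g(t)\ge 0$, ruling out the branch $\lambda e+1<0$; the uniform bound gives $\lambda e_g(t)\le \lambda\|W\|_\infty/(1-\lambda\|W\|_\infty)\le \lambda/(1-\lambda\|W\|_\infty)\le L-1<L$, ruling out the branch $L<\lambda e$; and $s_g(t)\le 1+\lambda\|W\|_\infty/(1-\lambda\|W\|_\infty)=1/(1-\lambda\|W\|_\infty)\le L$ together with $s_g(t)\ge 0$ yields $s_g(t)\in S(t)$. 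This is exactly where the two ``$L$ sufficiently large'' conditions are used. Hence $s_g(t)\in[\lambda e_g(t),\lambda e_g(t)+1]\cap S(t)=B(t,e_g(t))$, and $s_g$ is a Nash equilibrium; its measurability and membership in $L^1(\mu)$ follow from boundedness and the measurability of $\Gamma$. For injectivity, I read $s_g(t)=g(t)+\lambda e_g(t)$ backwards as $g(t)=s_g(t)-\lambda\int_T W(t,s)s_g(s)\,d\mu(s)$, so $g$ is recovered from $s_g$; thus if $g_1,g_2$ differ on a set of positive measure, so do $s_{g_1},s_{g_2}$. (Conceptually this is the statement $s_g=(I-\lambda T_W)^{-1}g$, but the explicit inversion is cleaner.)

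The main obstacle is the local-aggregate computation: one must justify the interchange of summation and integration (Tonelli, via nonnegativity) and the application of Fubini (via the uniform bounds together with $g\le 1$), and invoke the correctly oriented resolvent identity so that the extraneous terms cancel exactly. Once $e_g=h_g$ is established, the remaining verification and the injectivity claim are routine estimates with the geometric bound $\|W\|_\infty/(1-\lambda\|W\|_\infty)$.
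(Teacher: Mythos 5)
Your proof is correct, and it establishes exactly what the paper's proof does, but by a genuinely different route. The paper argues at the operator level in $L^\infty(\mu)$: it defines $\mathbb{W}f(t)=\int_T W(t,s)f(s)\,ds$, observes $\lambda\|\mathbb{W}\|\leq\lambda\|W\|_\infty<1$, and sets $s_g:=(I-\lambda\mathbb{W})^{-1}g$ via the Neumann series, so the fixed-point equation $s_g=g+\lambda\mathbb{W}s_g$ holds by construction and injectivity of $g\mapsto s_g$ is immediate from invertibility; the substantive work there is a separate claim identifying the abstract operator $\sum_{n\geq 1}\lambda^{n-1}\mathbb{W}^n$ with the integral operator having kernel $\Gamma(\cdot,\cdot,\lambda)$, proved by uniform estimates on characteristic functions together with density of simple functions in $L^\infty(\mu)$. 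You go in the opposite direction: taking the explicit kernel formula as the definition of $s_g$, you prove the one-sided resolvent identity $\int_T W(t,s)\Gamma(s,r,\lambda)\,d\mu(s)=\lambda^{-1}\left(\Gamma(t,r,\lambda)-W(t,r)\right)$ by Tonelli and an index shift, and then Fubini yields $e_g=h_g$, i.e.\ $s_g=g+\lambda\mathbb{W}s_g$ pointwise; injectivity follows by solving this equation for $g$. The endgame (the branch analysis of $B(t,e)$, the roles of the two largeness conditions on $L$) coincides with the paper's. Your route is more elementary and self-contained—only Tonelli/Fubini on nonnegative kernels, with no Neumann-series theorem and no density argument—while the paper's route explains where the kernel $\Gamma$ comes from (the formula is derived rather than verified) and embeds the result in the standard theory of Fredholm integral equations of the second kind, obtaining injectivity for free. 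One small point: like the paper, you should fix a representative of $W$ with $W(t,s)\leq\|W\|_\infty$ everywhere (the paper does this ``without loss of generality'') so that the pointwise bounds $0\leq W_n(t,s)\leq\|W\|_\infty^n$ and the ensuing inequalities hold for all $(t,s)$ rather than only almost everywhere.
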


Denote by $NE(\mathcal{G})$ the set of all Nash equilibria of $\mathcal{G}.$ The latter part of Proposition \ref{proposition} implies that the cardinality of $NE(\mathcal{G})$ is  $c$, the cardinality of the continuum. This is because $|\mathcal{S}| \leq |NE (\mathcal{G})| \leq |L^1(\mu)| \text{ and } |\mathcal{S}| = |L^1(\mu)| = c$.

\begin{proof}
    See Section \ref{sec: proof of example 2}.
\end{proof}

For example, let \( W(t,s) = t^\alpha s^{1-\alpha} \) with \( \alpha \in (0,1) \) and \( \lambda < 1 \).
From simple calculations, we know that 
\[ W_n(t,s) = \frac{1}{2^{n-1}} t^\alpha s^{1-\alpha}.\]
Therefore, we have
\[
\Gamma(t,s, \lambda) = \sum_{n=1}^\infty \lambda^{n-1} W_n(t,s) = \sum_{n=1}^\infty \lambda^{n-1} \left( \frac{1}{2^{n-1}} t^\alpha s^{1-\alpha} \right)
= t^\alpha s^{1-\alpha} \sum_{n=1}^\infty \left( \frac{\lambda}{2} \right)^{n-1}
= \frac{2}{2-\lambda} t^\alpha s^{1-\alpha}.
\]
Assuming \( g(t) = 1 \) for all $t \in [0,1]$, we obtain a Nash equilibrium \(s_g\) as

\[
s_g(t) = g(t) + \lambda \int_0^1 \Gamma(t,s, \lambda) g(s) \, ds
= 1 + \lambda \int_0^1 \frac{2}{2-\lambda} t^\alpha s^{1-\alpha} \, ds
= 1 + \frac{2\lambda}{2-\lambda} t^\alpha \int_0^1 s^{1-\alpha} \, ds
= 1 + \frac{2\lambda}{(2-\lambda)(2-\alpha)} t^\alpha.
\]

\end{example}

\section{Characterization of Nash equilibria of graphon games}
\label{sec:Characterization}
\subsection{Setup}
In this section, we consider graphon games and network games where all players have the same strategy set.
The formal setup is as follows.
The strategy set \( S \) is common to all players and is a weakly compact convex subset of a separable Banach space that contains 0.
Let \( C \) be the space of real-valued functions on \( S^2 \) that are continuous with respect to the weak topology of \(S\), endowed with the supremum norm \(\| \cdot \|\). 
We restrict the set of utility functions to the subset \( C_0 \) of \( C \), consisting of functions whose sup norm is less than or equal to 1. Since any function \( f \in C \) is bounded, this restriction imposes no constraints in terms of preference relations.
As \( S \) is  weakly compact, and the weak topology of \( S \) is metrizable \citep[][Theorem 3, p.434]{dunford1988linear}, the Banach space \( C \) is separable \citep[][Lemma 3.99]{RefWorks:RefID:7-aliprantis2006infinite}.

For notational simplicity, we consider the space of agents in a graphon game as \( (0,1] \).
We assume that the utility function profile is measurable.
Thus, in this section, we focus on a graphon game defined by \( ((T, \Sigma, \mu), W, U, S) \), where \( T = (0,1] \), \( \Sigma \) is the \(\sigma\)-algebra consisting of Lebesgue measurable subsets of \( (0,1] \), \( \mu \) is the Lebesgue measure, and \( U \) is a measurable function from \( (0,1] \) to \( C_0 \).\footnote{\(U(t) \in C_0\) represents the utility function of agent \(t\). Note that in Section \ref{sec: Existence results}, \(U\) is a function from \(G_S \times E\) to \(\mathbf{R}\), while in this section, \(U\) is a function from \((0,1]\) to \(C_0\).}
We refer to the graphon game simply as \( (W, U) \), specifying only the graphon and utility functions.
Similarly, as for network games, we consider only a network game defined by \( (I, (A_{i,j})_{i,j\in I}, (v_i)_{i \in I}, (S_i)_{i \in I}) \), where \( S_i = S \) for all \( i \in I \), and we represent it as \( (I, (A(i,j))_{i,j\in I}, (v(i))_{i \in I}) \).

Let \( f : (0,1] \to S \) be a measurable function. Then, since \( S \) is closed convex and contain 0, it follows that
\(
\int W(t,s) f(s) \, d\mu(s) \in S
\)
for all \(t \in (0,1]\)
\citep[][Corollary 8, p.48]{RefWorks:RefID:16-diestel1977vector}.
For any \( \epsilon > 0 \), the set
\[
\{ t \in (0,1] : U(t)(f(t), \int W(t,s) f(s) \, d\mu(s)) \geq \max_{a \in S} U(t)(a, \int W(t,s) f(s) \, d\mu(s)) - \epsilon \}
\]
is measurable. In fact, if we define \( \mathbb{W}f(t) = \int W(t,s) f(s) \, d\mu(s) \) for all \( t \in (0,1] \), then this set coincides with
\[
(U, f, \mathbb{W}f)^{-1}(\{(u,a,e) \in C \times S \times S : u(a,e) \geq \max_{a' \in S} u(a',e) - \epsilon\}).
\]
This set is measurable because
\[
\{(u,a,e) \in C \times S \times S : u(a,e) \geq \max_{a' \in S} u(a',e) - \epsilon\}
\]
is closed in \( C \times S \times S \), and \( (U, f, \mathbb{W}f) \) is jointly measurable.\footnote{As for the measurability of \( \mathbb{W}f \), see Lemma \ref{proposition: measurability}
in the Appendix and the subsequent argument.}

Thus, we can define a concept of approximate equilibrium for both network games and graphon games as follows.
\begin{definition}
    Let \( \epsilon \) be a nonnegative number.
\begin{enumerate}
    \item Let \( \mathcal{G} = (I, (A(i,j))_{i,j\in I}, (v(i))_{i \in I}) \) be a network game. A strategy profile \( s \) of \( \mathcal{G} \) is an \( \epsilon \)-\textit{Nash equilibrium} if
    \[
    \frac{| \{ i \in I : v(i)(s(i), \sum_{j \in I} A(i,j) s(j)) \geq \max_{a \in S} v(i)(a, \sum_{j \in I} A(i,j) s(j)) - \epsilon \} |}{|I|} \geq 1 - \epsilon.
    \]
    \item Let \( \mathcal{G} = (W, U) \) be a graphon game. A strategy profile \( f \) of \( \mathcal{G} \) is an \( \epsilon \)-\textit{Nash equilibrium} if
    \[
    \mu \left( \left\{ t \in (0,1] : U(t)(f(t), \int W(t,s) f(s) d\mu(s)) \geq \max_{a \in S} U(t)(a, \int W(t,s) f(s) d\mu(s)) - \epsilon \right\} \right) \geq 1 - \epsilon.
    \]
\end{enumerate}
\end{definition} 
We say a sequence \(\{\mathcal{G}_n = (W_n,U_n)\}\) of graphon games \textit{converge to} a graphon game \(\mathcal{G} = (W,U)\), denoted by \( \mathcal{G}_n \to \mathcal{G} \), if \(W_n \overset{a.e.}{\longrightarrow} W\) and \(U_n \overset{a.e.}{\longrightarrow} U\).\footnote{\(W_n \overset{a.e.}{\longrightarrow} W\) means \(W_n\) converges almost everywhere to \(W\). Similarly, \(U_n \overset{a.e.}{\longrightarrow} U\) means \(U_n\) converges almost everywhere to \(U\).
} 
Finally, for a network game \(( I, (A(i,j))_{i,j\in I}, (v(i))_{i \in I} )\) with \( I = \{1, 2, \dots, n\} \) and a strategy profile \(s\) of \(\mathcal{G}\), we define the \( n \)-step graphon \( W_A \), the \( n \)-step function \( U_v : (0,1] \to C_0 \), and the \( n \)-step function \( f_s : (0,1] \to S \) as follows:\footnote{A graphon \(W:(0,1]^2 \to [0,1]\) is called an \(n\)-\textit{step graphon} if it is constant on each rectangle \(\left(\frac{i-1}{n}, \frac{i}{n}\right] \times \left(\frac{j-1}{n}, \frac{j}{n}\right] \) for all \(i,j = 1, 2, \dots, n\).
Similarly, we call a function \(f\) from \( (0,1]\) to a set \(X\) an \(n\)-\textit{step function} if it is constant on each interval \(\left(\frac{i-1}{n}, \frac{i}{n}\right]\) for \(i = 1, 2, \dots, n\).
}
\[
W_A(t,s) = \sum_{i,j=1}^n A(i,j) \chi_{\left(\frac{i-1}{n}, \frac{i}{n}\right]}(t) \chi_{\left(\frac{j-1}{n}, \frac{j}{n}\right]}(s);
\]
\[
U_v(t) = \sum_{i=1}^n v(i) \chi_{\left(\frac{i-1}{n}, \frac{i}{n}\right]}(t);
\]
\[
f_s(t) = \sum_{i=1}^n s(i) \chi_{\left(\frac{i-1}{n}, \frac{i}{n}\right]}(t).
\]

\subsection{Characterization results}
The following proposition is a consequence of the Lebesgue differentiation theorem \citep[Theorem 7.10]{RefWorks:RefID:13-rudin1987real} and, in our context, states that any graphon can be approximated by a sequence of networks.
\begin{lemma}[\cite{RefWorks:RefID:410-borgs2008convergent}]
\label{lem:approximation graphon}
    For any graphon $W$, there exists a sequence $\{W_n\}$ of graphons such that $W_n$ is an $n$-step graphon for all $n \in \mathbf{N}$, and $W_n \overset{a.e.}{\longrightarrow} W.$
    \footnote{See Lemma 3.2 of \cite{RefWorks:RefID:410-borgs2008convergent}.}
\end{lemma}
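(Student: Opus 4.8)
The plan is to build each $W_n$ by replacing $W$ with its average value on each cell of the uniform $n \times n$ grid, and then to extract almost-everywhere convergence from the Lebesgue differentiation theorem. Concretely, writing $Q^n_{ij} = \left(\frac{i-1}{n}, \frac{i}{n}\right] \times \left(\frac{j-1}{n}, \frac{j}{n}\right]$ for $i,j \in \{1, \dots, n\}$, I would set
\[
W_n(t,s) = n^2 \int_{Q^n_{ij}} W(x,y)\, dx\, dy \quad \text{whenever } (t,s) \in Q^n_{ij}.
\]
This is manifestly an $n$-step function; it is measurable, being a finite linear combination of indicators of measurable rectangles; and since $W$ takes values in $[0,1]$, each average again lies in $[0,1]$. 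Hence every $W_n$ is a genuine graphon, and the structural requirements of the statement hold by construction.

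The substantive task is to show $W_n(t,s) \to W(t,s)$ for almost every $(t,s)$. Because $W$ is bounded and $(0,1]^2$ has finite measure, $W \in L^1((0,1]^2)$, so by the Lebesgue differentiation theorem almost every $(t,s)$ is a Lebesgue point of $W$. Fixing such a point and letting $Q^n(t,s)$ denote the grid cell containing it, the value $W_n(t,s)$ is exactly the average of $W$ over $Q^n(t,s)$, so the goal reduces to showing these cell-averages tend to $W(t,s)$.

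The main obstacle — and the reason one cannot simply appeal to the martingale convergence theorem — is that the grids at levels $n$ and $n+1$ do not refine one another, so the $\sigma$-algebras they generate are not nested. I would instead invoke the form of the Lebesgue differentiation theorem valid for families of sets that shrink \emph{nicely} (regularly) to a point. The cell $Q^n(t,s)$ has diameter $\sqrt{2}/n \to 0$ and contains $(t,s)$, hence lies inside the ball $B((t,s), \sqrt{2}/n)$; moreover
\[
\frac{m\big(Q^n(t,s)\big)}{m\big(B((t,s), \sqrt{2}/n)\big)} = \frac{1/n^2}{2\pi/n^2} = \frac{1}{2\pi},
\]
a constant independent of $n$. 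Thus $\{Q^n(t,s)\}_n$ shrinks nicely to $(t,s)$ with regularity constant $1/(2\pi)$, and the nicely-shrinking-sets version of the Lebesgue differentiation theorem yields $W_n(t,s) = \frac{1}{m(Q^n(t,s))}\int_{Q^n(t,s)} W\, dm \to W(t,s)$ at every Lebesgue point. Since almost every point is a Lebesgue point, this gives $W_n \overset{a.e.}{\longrightarrow} W$ and completes the argument. The only delicate point to state carefully is the regularity estimate above, which certifies that the non-nested $n$-adic cells still fall within the scope of the differentiation theorem.
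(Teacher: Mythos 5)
Your proof is correct and takes essentially the same route as the paper: the paper obtains this lemma as a consequence of the Lebesgue differentiation theorem for nicely shrinking sets (Rudin, Theorem 7.10, via Lemma 3.2 of Borgs et al.), which is exactly your grid-cell-average construction with the regularity constant $1/(2\pi)$. Indeed, the paper's own proof of the analogous one-dimensional Bochner-valued statement (Lemma \ref{lem:approximation strategy}) uses precisely the same argument — cell averages over $\left(\frac{i-1}{k_n}, \frac{i}{k_n}\right]$, verification that these cells shrink nicely to each point, and the differentiation theorem — so your write-up simply fills in, for the two-dimensional case, the details the paper delegates to the citation.
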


By extending the Lebesgue differentiation theorem to Bochner integration, we can derive the following proposition, which allows us to approximate the utility functions and strategy profile of a graphon game by those of network games.
\begin{lemma}\label{lem:approximation strategy}
    Let $E$ be a Banach space, $C$ be a closed convex subset of $E$, $f:(0,1] \to C$ be Bochner integrable. Then, for any sequence \(\{k_n\}\) of integers with \(k_n \to \infty\), there exists a sequence $\{f_n\}$ of functions such that $f_n$ is a $k_n$-step function from $(0,1]$ to $C$, and $f_n \overset{a.e.}{\longrightarrow} f.$ 
\end{lemma}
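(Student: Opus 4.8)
The plan is to define $f_n$ by averaging $f$ over the partition intervals and then to invoke a Bochner-integral analogue of the Lebesgue differentiation theorem. Concretely, for each $n$ and each $t \in \left(\frac{i-1}{k_n}, \frac{i}{k_n}\right]$, I would set
\[
f_n(t) = k_n \int_{\left(\frac{i-1}{k_n},\, \frac{i}{k_n}\right]} f(s)\, d\mu(s),
\]
which is constant on each partition interval, so $f_n$ is a genuine $k_n$-step function. Since $C$ is closed and convex, the normalized Bochner integral of a $C$-valued function over a set of positive measure again lies in $C$ (the same fact, \citep[Corollary 8, p.48]{RefWorks:RefID:16-diestel1977vector}, already used in the setup), so each $f_n$ takes values in $C$.

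It then remains to prove $f_n(t) \to f(t)$ for a.e.\ $t$. First I would establish that almost every $t$ is a \emph{Lebesgue point} of $f$, in the sense that
\[
\lim_{r \to 0^+} \frac{1}{2r} \int_{t-r}^{t+r} \| f(s) - f(t)\|\, d\mu(s) = 0.
\]
Because $f$ is Bochner integrable, Pettis' measurability theorem gives that $f$ is essentially separably valued; I would fix a countable dense subset $\{x_k\}$ of its essential range. For each $k$, the scalar function $s \mapsto \|f(s) - x_k\|$ lies in $L^1(\mu)$, so by the classical Lebesgue differentiation theorem \citep[Theorem 7.10]{RefWorks:RefID:13-rudin1987real} there is a full-measure set on which its symmetric averages converge to $\|f(t) - x_k\|$. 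Intersecting these sets over all $k$ and approximating $f(t)$ by a suitable $x_k$ via the triangle inequality yields the Lebesgue point property at a.e.\ $t$.

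Finally I would transfer this to the (generally off-center) partition intervals. Writing $I_n(t)$ for the interval $\left(\frac{i-1}{k_n}, \frac{i}{k_n}\right]$ containing $t$, one has
\[
\| f_n(t) - f(t)\| = \Bigl\| k_n \int_{I_n(t)} \bigl(f(s) - f(t)\bigr)\, d\mu(s) \Bigr\| \le k_n \int_{I_n(t)} \| f(s) - f(t)\|\, d\mu(s).
\]
Since $t \in I_n(t)$ and $\mu(I_n(t)) = 1/k_n \to 0$, the interval $I_n(t)$ is contained in $(t - 1/k_n,\, t + 1/k_n)$ and fills exactly half of it; hence the right-hand side is at most $2 \cdot \frac{k_n}{2} \int_{t-1/k_n}^{t+1/k_n} \|f(s) - f(t)\|\, d\mu(s)$, which tends to $0$ at every Lebesgue point as $k_n \to \infty$. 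Thus $f_n \to f$ a.e.

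The main obstacle is the Lebesgue differentiation step in the vector-valued setting, combined with the fact that the partition intervals are not centered at $t$. The classical theorem is stated for symmetric averages, so the argument must both lift it from scalars to $E$-valued $f$ using only separability of the essential range (rather than any special geometry of $E$) and reduce the off-center averages to symmetric ones through the ``nicely shrinking sets'' comparison sketched above.
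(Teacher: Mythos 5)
Your proposal is correct and follows essentially the same route as the paper: the paper also defines $f_n$ by averaging $f$ over the partition intervals, uses \citep[Corollary 8, p.48]{RefWorks:RefID:16-diestel1977vector} for membership in $C$, and proves a.e.\ convergence by lifting the scalar Lebesgue differentiation theorem to Bochner integrable functions via a countable dense subset of the (separable) range and the triangle inequality. The only cosmetic difference is that the paper invokes Rudin's ``nicely shrinking sets'' version of the theorem directly (noting $E_n(t) \subset B(t,1/k_n)$ with $\mu(E_n(t)) = \tfrac{1}{2}\mu(B(t,1/k_n))$), whereas you re-derive that comparison by hand from symmetric Lebesgue points --- the same argument in substance.
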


\begin{proof}
See Section \ref{sec:Proof of Lemma: approximation strategy}.    
\end{proof}

By applying these results, any graphon game can be regarded as the limit of a sequence of large network games:
\begin{lemma}\label{lem:approximation graphon game}
    For any graphon game \(\mathcal{G} = (W, U)\), there exists a sequence \(\{\mathcal{G}_n = ( I_n, (A_n(i,j))_{i,j\in I_n}, (v_n(i))_{i \in I_n} ) \}\) of network games such that \( W_{A_n} \overset{a.e.}{\longrightarrow} W \), \( U_{v_n} \overset{a.e.}{\longrightarrow} U \), and \( |I_n| \to \infty \).
Furthermore, for any strategy profile \(f\) of \(\mathcal{G}\) and any sequence \(\{\mathcal{G}_n\}\) of network games that satisfies these properties, there exists a sequence \(\{s_n\}\) of strategy profiles such that \(s_n\) is a strategy profile of \(\mathcal{G}_n\) for all $n \in \mathbf{N},$ and \(f_{s_n} \overset{a.e.}{\longrightarrow} f\).

\end{lemma}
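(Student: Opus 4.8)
The plan is to deduce the statement from the two approximation lemmas already in hand, exploiting the fact that an $n$-step graphon, an $n$-step function into $C_0$, and an $n$-step function into $S$ are merely repackagings of, respectively, the adjacency matrix, the utility profile, and a strategy profile of a network game on $n$ players. The essential dictionary is: a network game on $I = \{1,\dots,n\}$ determines the $n$-step objects $W_A$, $U_v$, $f_s$ by spreading its data over the uniform partition of $(0,1]$, and conversely any $n$-step graphon or $n$-step function is obtained in this way from a unique network game on $n$ players. Granting this, the only genuine work is verifying the hypotheses of Lemma \ref{lem:approximation strategy} for $U$ and for $f$, and keeping the partition indices aligned.

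For the first assertion I would apply Lemma \ref{lem:approximation graphon} to $W$, obtaining $n$-step graphons $W_n \overset{a.e.}{\longrightarrow} W$, and then apply Lemma \ref{lem:approximation strategy} with $E = C$, with the closed convex set taken to be $C_0$, and with index sequence $k_n = n$. Here $C_0$ is the closed unit ball of the separable Banach space $C$, hence closed and convex, and $U:(0,1] \to C_0$ is measurable into a bounded set over a finite measure space, hence Bochner integrable (strong measurability following from separability of $C$ and Pettis' theorem, and $\int \|U\|\,d\mu \leq 1 < \infty$). This yields $n$-step functions $U_n : (0,1] \to C_0$ with $U_n \overset{a.e.}{\longrightarrow} U$. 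Since $W_n$ is constant on each rectangle $\left(\frac{i-1}{n},\frac{i}{n}\right] \times \left(\frac{j-1}{n},\frac{j}{n}\right]$ and $U_n$ is constant on each interval $\left(\frac{i-1}{n},\frac{i}{n}\right]$ of the same uniform $n$-partition, I would set $A_n(i,j)$ to be the value of $W_n$ on the $(i,j)$-th rectangle and $v_n(i) \in C_0$ to be the value of $U_n$ on the $i$-th interval. Bundling these over $I_n = \{1,\dots,n\}$ produces network games $\mathcal{G}_n$ for which $W_{A_n} = W_n$ and $U_{v_n} = U_n$ by the very definition of the $n$-step graphon and $n$-step function; hence $W_{A_n} \overset{a.e.}{\longrightarrow} W$, $U_{v_n} \overset{a.e.}{\longrightarrow} U$, and $|I_n| = n \to \infty$.

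For the second assertion, let $f$ be a strategy profile of $\mathcal{G}$ and let $\{\mathcal{G}_n\}$ be an arbitrary sequence of network games satisfying the three convergence properties, writing $k_n = |I_n| \to \infty$. I would invoke Lemma \ref{lem:approximation strategy} a second time, now with the closed convex set equal to $S$ and the index sequence $\{k_n\}$: the set $S$ is weakly compact and convex, hence bounded and norm closed, and $f$ is measurable into the bounded set $S$, hence Bochner integrable. This gives $k_n$-step functions $g_n : (0,1] \to S$ with $g_n \overset{a.e.}{\longrightarrow} f$. Reading off the constant value $s_n(i) \in S$ of $g_n$ on $\left(\frac{i-1}{k_n},\frac{i}{k_n}\right]$ defines a strategy profile $s_n$ of $\mathcal{G}_n$, and by construction $f_{s_n} = g_n$, so $f_{s_n} \overset{a.e.}{\longrightarrow} f$.

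The step requiring the most care — and the reason Lemma \ref{lem:approximation strategy} is stated for an arbitrary sequence $\{k_n\}$ rather than for $k_n = n$ — is the second assertion, where the approximating network games are handed to us with no control over their sizes $k_n$; the freedom in the index sequence is precisely what lets the partition underlying the approximating step functions coincide with the player set $I_n$. Apart from this alignment, the only obstacles are the routine checks that $C_0$ and $S$ are closed and convex and that $U$ and $f$ are Bochner integrable, after which the claim reduces entirely to the dictionary between $n$-step objects and network-game data.
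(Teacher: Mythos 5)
Your proposal is correct and follows essentially the same route as the paper's own proof: approximate $W$ via Lemma \ref{lem:approximation graphon}, approximate $U$ via Lemma \ref{lem:approximation strategy} applied to the closed convex set $C_0 \subset C$, read off the network games from the step functions, and for the second part apply Lemma \ref{lem:approximation strategy} with $k_n = |I_n|$ to approximate $f$ by $k_n$-step functions into $S$. Your added verifications (Bochner integrability of $U$ and $f$, and that weak compactness of $S$ gives norm closedness and boundedness) are details the paper leaves implicit, but the argument is the same.
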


\begin{proof}
    Note that $U:(0,1] \to C_0$ is measurable by definition, and integrably bounded because $||U(t)||\leq 1$ for all $t\in (0,1].$ Thus, \(U\) is Bochner integrable.
    Since $C$ is a Banach space and $C_0$ is a closed convex subset of $C$, it follows from Lemma \ref{lem:approximation strategy} that there exists a sequence $\{U_n\}$ such that $U_n$ is a $n$-step function from $(0,1]$ to $C_0$ and $U_n \overset{a.e.}{\longrightarrow} U.$
    
    Similarly, it follows from Lemma \ref{lem:approximation graphon} that there exists a sequence $\{W_n\}$ of graphons such that $W_n$ is a $n$-step graphon and $W_n \overset{a.e.}{\longrightarrow} W.$ It is clear that for each \(n \in \mathbf{N}\), there exists a network game \(\mathcal{G}_n = ( I_n, (A_n(i,j))_{i,j\in I_n}, (v_n(i))_{i \in I_n} ) \) such that $|I_n| = n,$ $W_{A_n} = W_n$, and $U_{v_n} = U_n.$ The sequence $\{\mathcal{G}_n\}$ of network games satisfies the required properties. 
    
    As for the latter part, let $f$ be a strategy profile of $\mathcal{G}$ and \( \{ \mathcal{G}_n = ( I_n, (A_n(i,j))_{i,j\in I_n}, (v_n(i))_{i \in I_n} ) \} \) be a sequence of network games such that \( W_{A_n} \overset{a.e.}{\longrightarrow} W \),
    \( U_{v_n} \overset{a.e.}{\longrightarrow} U \),
    and \( |I_n| \to \infty \). Let \( k_n = |I_n| \). Then, we have \( k_n \to \infty \). Therefore, by Lemma \ref{lem:approximation strategy}, there exists a sequence $\{f_n\}$ of functions such that $f_n$ is a $k_n$-step function from $(0,1]$ to $S,$ and $f_n \overset{a.e.}{\longrightarrow} f$. Again, it is clear that there exists a strategy profile \( s_n \) on \( \mathcal{G}_n \) such that \( f_{s_n} = f_n \). The sequence \( \{s_n\} \) of strategy profiles satisfies the required properties.

\end{proof}

We say that a sequence of network games converges to a graphon game if it satisfies the properties in Lemma \ref{lem:approximation graphon game}:
\begin{definition}
    A sequence of network games \( \{\mathcal{G}_n = ( I_n, (A_n(i,j))_{i,j\in I_n}, (v_n(i))_{i \in I_n} ) \)\} \textit{converges to} a graphon game \( \mathcal{G} \), denoted by \( \mathcal{G}_n \to \mathcal{G} \), if it holds that \( W_{A_n} \overset{a.e.}{\longrightarrow} W \), \( U_{v_n} \overset{a.e.}{\longrightarrow} U \), and \( |I_n| \to \infty \).
\end{definition}
Then, Lemma \ref{lem:approximation graphon game} states that for any graphon game, there exists a sequence of network games that converges to the graphon game.

\begin{remark}
    The convergence of networks to graphons considered here differs from that in the recent theory of graph limits. In graph limit theory, graphons were introduced as a more refined limit concept of graphs, described by the \textit{cut metric}. However, this limit concept is difficult to handle with utility functions and strategies. In fact, a network game \(( I, (A(i,j))_{i,j\in I}, (v(i))_{i \in I} )\) can be regarded as a \textit{decorated graph}, where $I$ is the set of nodes, $A$ is a weighted graph on $I$, and each node \(i\in I\) is assigned a utility function \(v(i)\). It is challenging to give game-theoretic interpretation to the limiting objects of such decorated graphs (see \cite{RefWorks:RefID:416-kunszenti-kovács2019uniqueness} and \cite{RefWorks:RefID:415-kunszenti-kovács2022multigraph}). Therefore, we adopt an approach that continuously represents network games as graphon games, similar to the approach adopted by \cite{RefWorks:RefID:265-kannai1970continuity} in general equilibrium theory.
\end{remark}

We first prove that if a sequence of graphon games converges to a graphon game, then any convergent sequence of asymptotic Nash equilibria converges to a Nash equilibrium of the limiting graphon game.
\begin{theorem}\label{theorem:limit is Nash}
Let \( \mathcal{G} \) be a graphon game and \( f \) be a strategy profile of \( \mathcal{G} \).  
Suppose sequences \( \{ \mathcal{G}_n \} \), \( \{ f_n \} \), and \( \{ \epsilon_n \} \) of graphon games, strategy profiles, and nonnegative numbers satisfy the following properties:
\begin{itemize}
    \item \( f_n \) is an \( \epsilon_n \)-Nash equilibrium of \( \mathcal{G}_n \) for all $n \in \mathbf{N}$,
    \item \( \mathcal{G}_n \to \mathcal{G} \), \( f_n \overset{a.e.}{\longrightarrow} f \), and \( \epsilon_n \to 0 \).
\end{itemize}
Then, \( f \) is a Nash equilibrium of \( \mathcal{G} \).
\end{theorem}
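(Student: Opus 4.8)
The plan is to verify the defining inequality of a Nash equilibrium pointwise for almost every $t$, by passing to the limit in the approximate-equilibrium inequalities of the $\mathcal{G}_n$. Since $f(t)\in S$ always gives $U(t)(f(t),\mathbb{W}f(t))\le \max_{a\in S}U(t)(a,\mathbb{W}f(t))$, where $\mathbb{W}f(t)=\int W(t,s)f(s)\,d\mu(s)$, it suffices to establish the reverse inequality for a.e.\ $t$; the maximum is attained because $S$ is weakly compact and each $U(t)(\cdot,e)$ is weakly continuous.

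The first and most delicate step is to show that the local aggregates converge: for a.e.\ $t$, $\mathbb{W}_n f_n(t):=\int W_n(t,s)f_n(s)\,d\mu(s)\to \mathbb{W}f(t)$ weakly in $S$. I would test this against an arbitrary $x^*\in E^*$, writing $x^*(\mathbb{W}_nf_n(t))=\int W_n(t,s)\,x^*(f_n(s))\,d\mu(s)$. A Fubini argument applied to $W_n\overset{a.e.}{\longrightarrow}W$ on $(0,1]^2$ shows that for a.e.\ fixed $t$ one has $W_n(t,\cdot)\to W(t,\cdot)$ a.e.\ in $s$; combined with $x^*(f_n(s))\to x^*(f(s))$ a.e.\ (from $f_n\overset{a.e.}{\longrightarrow}f$ in the weak topology), the integrands converge a.e.\ in $s$. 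They are dominated by the constant $\|x^*\|\,M$, where $M=\sup_{a\in S}\|a\|<\infty$ since $S$ is weakly compact hence norm bounded and $|W_n|\le 1$. The dominated convergence theorem on the finite measure space $(0,1]$ then gives $x^*(\mathbb{W}_nf_n(t))\to x^*(\mathbb{W}f(t))$ for every $x^*$, which is precisely weak convergence in $S$ (all terms lying in $S$). This is where the hypotheses $W_n\to W$ and $f_n\to f$ are coupled through the integral.

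Next I would deal with the fact that each $f_n$ is only an $\epsilon_n$-equilibrium, so the best-response inequality holds only on a set $B_n$ with $\mu((0,1]\setminus B_n)\le\epsilon_n$. Since the conclusion concerns only the fixed limit $f$, I may pass to a subsequence $\{n_k\}$ with $\sum_k\epsilon_{n_k}<\infty$; by Borel--Cantelli, a.e.\ $t$ lies in $B_{n_k}$ for all large $k$. Intersecting with the full-measure sets on which $U_n(t)\to U(t)$ in sup norm, $f_n(t)\to f(t)$ weakly, and $\mathbb{W}_nf_n(t)\to\mathbb{W}f(t)$ weakly, I fix a generic $t$ and let $k\to\infty$ in
\[
U_{n_k}(t)\bigl(f_{n_k}(t),\mathbb{W}_{n_k}f_{n_k}(t)\bigr)\ \ge\ \max_{a\in S}U_{n_k}(t)\bigl(a,\mathbb{W}_{n_k}f_{n_k}(t)\bigr)-\epsilon_{n_k}.
\]

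To pass to the limit on both sides I would split $U_{n_k}(t)=U(t)+(U_{n_k}(t)-U(t))$: the perturbation is uniformly small on $S^2$ by sup-norm convergence, while $U(t)$ is uniformly continuous on the compact metrizable space $S^2$ (weak topology). Hence $U(t)(f_{n_k}(t),\mathbb{W}_{n_k}f_{n_k}(t))\to U(t)(f(t),\mathbb{W}f(t))$ for the left-hand side, and $\sup_{a\in S}|U(t)(a,\mathbb{W}_{n_k}f_{n_k}(t))-U(t)(a,\mathbb{W}f(t))|\to0$ forces $\max_{a\in S}U_{n_k}(t)(a,\mathbb{W}_{n_k}f_{n_k}(t))\to\max_{a\in S}U(t)(a,\mathbb{W}f(t))$ for the right-hand side. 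With $\epsilon_{n_k}\to0$ this yields $U(t)(f(t),\mathbb{W}f(t))\ge\max_{a\in S}U(t)(a,\mathbb{W}f(t))$ for a.e.\ $t$, hence equality, and $f$ is a Nash equilibrium. The main obstacle is the weak convergence of the local aggregates in Step~1, and in particular justifying the interchange of limit and integral when both the weights $W_n$ and the profiles $f_n$ vary with $n$.
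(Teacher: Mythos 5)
Your proposal is correct, and its overall skeleton — fix a generic $t$, pass to the limit in the approximate best-response inequality using convergence of the local aggregates plus continuity — matches the paper's proof. However, you implement three key steps differently. First, for the exceptional sets: you extract a subsequence with $\sum_k \epsilon_{n_k}<\infty$ and invoke Borel--Cantelli, whereas the paper works with the full sequence and no summability trick: it defines the ``bad'' sets $E_n$ (where the $\epsilon_n$-inequality fails) and $E$ (where the exact inequality fails for $f$), notes that $\mu\bigl(\bigcap_{n>N}E_n\bigr)\le \epsilon_m$ for every $m>N$, hence $\mu\bigl(\bigcup_N\bigcap_{n>N}E_n\bigr)=0$, and then shows $E\subset\bigcup_N\bigcap_{n>N}E_n$ by proving that any $t\in E$ has a positive optimality gap $\epsilon=\beta-\alpha$ that eventually exceeds $\epsilon_n$. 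Second, for the aggregates: you test against functionals $x^*\in E^*$ and use the scalar dominated convergence theorem, obtaining \emph{weak} convergence of $\int W_n(t,s)f_n(s)\,d\mu(s)$ under only weak a.e.\ convergence of $f_n$; the paper instead reads $f_n\overset{a.e.}{\longrightarrow}f$ as norm convergence and applies the Bochner (vector-valued) dominated convergence theorem to get norm convergence of the aggregates. Your version is slightly more general and, via the explicit Fubini slicing of the null set of $W_n\not\to W$, makes rigorous what the paper disposes of with a ``without loss of generality'' assumption of everywhere convergence. Third, for the limit passage: you use the sup-norm splitting $U_{n_k}(t)=U(t)+(U_{n_k}(t)-U(t))$ together with uniform continuity of $U(t)$ on the weakly compact metrizable $S^2$, while the paper proves joint continuity of the evaluation map $V(u,a,e)=u(a,e)$ and invokes Berge's maximum theorem to conclude $\beta_n\to\beta$; these are essentially equivalent, yours being the more elementary and self-contained route, the paper's packaging the argument into a reusable claim (which it indeed reuses in the proof of the converse theorem).
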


\begin{proof}
See Section \ref{sec: Proof of limit is Nash}.     
\end{proof}

As a corollary of this theorem, the same result holds for the limits of large network games.
\begin{corollary}\label{coro:limit is Nash:network}
    Let \( \mathcal{G} \) be a graphon game and \( f \) be a strategy profile.  
    Suppose sequences \( \{\mathcal{G}_n\} \), \( \{s_n\} \), and \( \{\epsilon_n\} \) of network games, strategy profiles, and nonnegative numbers satisfy the following properties:
    \begin{itemize}
        \item \( s_n \) is an \( \epsilon_n \)-Nash equilibrium of \( \mathcal{G}_n \) for all $n \in \mathbf{N}$,
        \item \( \mathcal{G}_n \to \mathcal{G} \), \( f_{s_n} \overset{a.e.}{\longrightarrow} f \), and \( \epsilon_n \to 0 \).
    \end{itemize}
    Then, \( f \) is a Nash equilibrium of \( \mathcal{G} \).
\end{corollary}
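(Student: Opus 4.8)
The plan is to reduce Corollary~\ref{coro:limit is Nash:network} to Theorem~\ref{theorem:limit is Nash} by passing from each network game to its $n$-step graphon-game representation. To each network game $\mathcal{G}_n = ( I_n, (A_n(i,j))_{i,j\in I_n}, (v_n(i))_{i \in I_n} )$ I would associate the graphon game $\hat{\mathcal{G}}_n := (W_{A_n}, U_{v_n})$ assembled from the $n$-step graphon $W_{A_n}$ and the $n$-step utility profile $U_{v_n}$, and to each strategy profile $s_n$ the $n$-step profile $f_{s_n}$, all as defined in the Setup. Each $\hat{\mathcal{G}}_n$ is a legitimate graphon game in the sense of this section, since $W_{A_n}$ is a step graphon valued in $[0,1]$ and $U_{v_n}$ is a measurable step function into $C_0$. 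The whole point of the embedding is that the hypotheses of Theorem~\ref{theorem:limit is Nash} for the sequence $\{\hat{\mathcal{G}}_n\}$ are exactly the hypotheses we are handed for $\{\mathcal{G}_n\}$.

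First I would note that $\hat{\mathcal{G}}_n \to \mathcal{G}$ as graphon games: by the definition of convergence of network games, $\mathcal{G}_n \to \mathcal{G}$ means precisely $W_{A_n} \overset{a.e.}{\longrightarrow} W$, $U_{v_n} \overset{a.e.}{\longrightarrow} U$, and $|I_n| \to \infty$, and the first two of these are, verbatim, the statement that $\hat{\mathcal{G}}_n \to \mathcal{G}$ as graphon games. The remaining hypotheses $f_{s_n} \overset{a.e.}{\longrightarrow} f$ and $\epsilon_n \to 0$ are assumed outright.

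The crux is to verify that $f_{s_n}$ is an $\epsilon_n$-Nash equilibrium of the graphon game $\hat{\mathcal{G}}_n$. Here I would use that $W_{A_n}$, $U_{v_n}$, and $f_{s_n}$ are all constant on each interval $(\tfrac{i-1}{n},\tfrac{i}{n}]$, with values the $i$-th row of $A_n$, the function $v_n(i)$, and the action $s_n(i)$ respectively. A direct computation shows that the graphon local aggregate $\int W_{A_n}(t,s) f_{s_n}(s)\, d\mu(s)$ is itself constant on $(\tfrac{i-1}{n},\tfrac{i}{n}]$ and coincides with the local aggregate of player $i$ in $\mathcal{G}_n$, the averaging being supplied by the Lebesgue length $1/|I_n|$ of each subinterval. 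Since $S$ is common to all players and all agents, the $\max_{a \in S}$ terms match as well, so for $t$ in the $i$-th subinterval agent $t$'s approximate-optimality inequality in $\hat{\mathcal{G}}_n$ is literally player $i$'s approximate-optimality inequality in $\mathcal{G}_n$. Hence the set of approximately optimizing agents in $\hat{\mathcal{G}}_n$ is the union of the subintervals indexed by the approximately optimizing players of $\mathcal{G}_n$; each such subinterval has measure $1/|I_n|$, so its Lebesgue measure equals the fraction of approximately optimizing players. The measure-based $\epsilon_n$-Nash condition for $\hat{\mathcal{G}}_n$ is therefore equivalent to the fraction-based $\epsilon_n$-Nash condition for $\mathcal{G}_n$, which holds by assumption.

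With all hypotheses in place, I would conclude by applying Theorem~\ref{theorem:limit is Nash} to the graphon games $\{\hat{\mathcal{G}}_n\}$, profiles $\{f_{s_n}\}$, and tolerances $\{\epsilon_n\}$, obtaining that $f$ is a Nash equilibrium of $\mathcal{G}$. I expect the only genuine work to be the local-aggregate identity in the third paragraph: one must check interval by interval that the Bochner integral against the step graphon $W_{A_n}$ collapses to the weighted sum defining the network local aggregate, with the correct normalization, and then translate ``fraction of players at least $1-\epsilon_n$'' into ``measure of agents at least $1-\epsilon_n$.'' Everything else is routine bookkeeping.
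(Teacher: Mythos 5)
Your proposal is correct and follows essentially the same route as the paper: the paper likewise passes to the graphon games $(W_{A_n}, U_{v_n})$ with profiles $f_{s_n}$, observes that the fraction of approximately optimizing players in $\mathcal{G}_n$ equals the Lebesgue measure of approximately optimizing agents in the step-graphon game (since the step local aggregate collapses to the normalized network aggregate on each subinterval), and then invokes Theorem \ref{theorem:limit is Nash}. Your explicit attention to the $1/|I_n|$ normalization in the local-aggregate identity is the one step the paper states only as an asserted equality, so nothing is missing.
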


\begin{proof}
Let \( \mathcal{G}_n = ( I_n, (A_n(i,j))_{i,j\in I_n}, (v_n(i))_{i \in I_n} )\).
Define a sequence  \( \{\mathcal{G}'_n = (W_n, U_n)\}\) of graphon games as \((W_n, U_n) = (W_{A_n}, U_{v_n})\) for all \(n \in \mathbf{N}\), and define a strategy profile \( f_n \) of \( \mathcal{G}'_n \) as \( f_n = f_{s_n} \).  
Then,
\[
\frac{| \{ i \in I_n : v_n(i)(s_n(i), \sum_{j \in I_n} A_n(i,j) s_n(j) \geq \max_{a \in S} v_n(i)(a, \sum_{j \in I_n} A_n(i,j) s_n(j)) - \epsilon_n \} |}{|I_n|}  
\]
is equal to
\[
\mu \left( \left\{ t \in (0,1] : U_n(t)(f_n(t), \int W_n(t,s) f_n(s) d\mu(s)) \geq \max_{a \in S} U_n(t)(a, \int W_n(t,s) f_n(s) d\mu(s)) - \epsilon_n \right\} \right).
\]
Since \( s_n \) is an \( \epsilon_n \)-Nash equilibrium of \( \mathcal{G}_n \), it follows that \( f_n \) is an \( \epsilon_n \)-Nash equilibrium of \( \mathcal{G}'_n \).  
Since \( \mathcal{G}'_n \to \mathcal{G} \) and \( f_n \overset{a.e.}{\longrightarrow} f \) by assumption, it follows from Theorem \ref{theorem:limit is Nash} that \( f \) is a Nash equilibrium of \( \mathcal{G} \).

\end{proof}
Next, we prove the converse of Theorem \ref{theorem:limit is Nash}: If a sequence of graphon games converges to a graphon game, it is only asymptotic Nash equilibria that converge to a Nash equilibrium of the limiting graphon game.

\begin{theorem}\label{theorem:Nash is limit}
Let \( \mathcal{G} \) be a graphon game and \( f \) be a Nash equilibrium of \( \mathcal{G} \).  
Consider sequences \( \{ \mathcal{G}_n \} \) and \( \{ f_n \} \) of graphon games and strategy profiles such that \( \mathcal{G}_n \to \mathcal{G} \) and \( f_n \overset{a.e.}{\longrightarrow} f \).
Then, there exists a sequence \( \{\epsilon_n\} \) of nonnegative numbers such that \( f_n \) is an \( \epsilon_n \)-Nash equilibrium of \( \mathcal{G}_n \) for all $n \in \mathbf{N}$, and \( \epsilon_n \to 0 \).

\end{theorem}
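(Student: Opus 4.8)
The plan is to recast the notion of an $\epsilon$-Nash equilibrium in terms of a scalar \emph{regret} function and to reduce the theorem to showing that these regrets vanish in measure. For each $n$ write $e_n(t) = \int W_n(t,s) f_n(s)\, d\mu(s)$ and $e(t) = \mathbb{W}f(t)$, and define
\[
r_n(t) = \max_{a \in S} U_n(t)(a, e_n(t)) - U_n(t)(f_n(t), e_n(t)), \qquad r(t) = \max_{a \in S} U(t)(a, e(t)) - U(t)(f(t), e(t)).
\]
Both are nonnegative and measurable (by the measurability discussion in the setup), $f_n$ is an $\epsilon$-Nash equilibrium of $\mathcal{G}_n$ exactly when $\mu(\{r_n > \epsilon\}) \le \epsilon$, and, since $f$ is a Nash equilibrium of $\mathcal{G}$, we have $r = 0$ a.e. I would first observe that it suffices to prove $r_n \to 0$ in measure: setting $\epsilon_n = \inf\{\epsilon \ge 0 : \mu(\{r_n > \epsilon\}) \le \epsilon\}$, the value $\epsilon_n$ is at most $1$ because $\mu(\{r_n > 1\}) \le \mu((0,1]) = 1$, the infimum is attained thanks to the right-continuity of $\epsilon \mapsto \mu(\{r_n > \epsilon\})$ (so $f_n$ is genuinely an $\epsilon_n$-Nash equilibrium of $\mathcal{G}_n$), and convergence in measure forces $\mu(\{r_n > \delta\}) \le \delta$ for all large $n$, hence $\epsilon_n \le \delta$ eventually and $\epsilon_n \to 0$.

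Since $(0,1]$ has finite measure, almost-everywhere convergence implies convergence in measure, so the heart of the argument is to prove $r_n(t) \to 0$ for a.e.\ $t$. Fix $t$ outside a null set on which simultaneously $U_n(t) \to U(t)$ in $C$, $f_n(t) \to f(t)$ weakly, $r(t) = 0$, and (via Fubini, using $W_n \to W$ a.e.\ on $(0,1]^2$) $W_n(t,\cdot) \to W(t,\cdot)$ a.e. The first substep is to show $e_n(t) \to e(t)$ in the weak topology of $S$. For any $x^* \in E^*$ we have $x^*(e_n(t)) = \int_0^1 W_n(t,s)\, x^*(f_n(s))\, ds$; the integrand converges pointwise a.e.\ in $s$ to $W(t,s)\, x^*(f(s))$ and is dominated by the constant $\|x^*\|\, M$, where $M = \sup_{a \in S}\|a\| < \infty$ since $S$, being weakly compact, is norm bounded. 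Dominated convergence then yields $x^*(e_n(t)) \to x^*(e(t))$, i.e.\ weak convergence in $E$; as $e_n(t), e(t) \in S$ and the weak topology of $S$ is metrizable, this is convergence in $S$.

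The second substep passes from the convergence of the arguments to the convergence of the regret. Writing $u_n = U_n(t)$ and $u = U(t)$, we have $\|u_n - u\| \to 0$, $f_n(t) \to f(t)$, and $e_n(t) \to e(t)$ in the metrizable weak topology of $S$. For the evaluation term, $|u_n(f_n(t), e_n(t)) - u(f(t), e(t))| \le \|u_n - u\| + |u(f_n(t), e_n(t)) - u(f(t), e(t))|$, and the second summand vanishes by weak continuity of $u$. For the maximum term I would invoke uniform continuity: since $S^2$ is compact in the metrizable weak topology, $u$ is uniformly continuous, so $\sup_{a \in S}|u(a, e_n(t)) - u(a, e(t))| \to 0$ as $e_n(t) \to e(t)$; combined with $\|u_n - u\| \to 0$ this gives $\sup_{a \in S}|u_n(a, e_n(t)) - u(a, e(t))| \to 0$, and hence $\max_{a \in S} u_n(a, e_n(t)) \to \max_{a \in S} u(a, e(t))$. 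Subtracting the two limits gives $r_n(t) \to r(t) = 0$.

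The main obstacle is this first substep, namely obtaining the almost-everywhere weak convergence of the aggregates $e_n(t)$ while three approximations ($W_n \to W$, $U_n \to U$, and $f_n \to f$) act simultaneously; the Fubini reduction together with the uniform norm bound $M$ furnished by weak compactness of $S$ is precisely what makes the dominated-convergence argument go through, and uniform continuity of $u$ on the compact metrizable $S^2$ is what upgrades pointwise control of the maximand to control of the maximum. Once $r_n \to 0$ a.e.\ is established, convergence in measure and the construction of $\{\epsilon_n\}$ from the first paragraph complete the proof.
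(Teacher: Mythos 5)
Your proof is correct and takes essentially the same approach as the paper's: both define the pointwise regret functions $r_n$, $r$, show $r_n \to 0$ almost everywhere (using a dominated-convergence argument for the aggregates $e_n(t)$ and continuity of $(u,a,e)\mapsto u(a,e)$ and of $\max_{a\in S}u(a,\cdot)$ on the weakly compact, metrizable $S$), and then pass from a.e.\ convergence to convergence in measure to conclude $\epsilon_n \to 0$. Your minor variations — scalar dominated convergence yielding weak convergence of $e_n(t)$ where the paper uses Bochner dominated convergence to get norm convergence, uniform continuity on $S^2$ in place of Berge's maximum theorem, and your explicit right-continuity argument that the infimum defining $\epsilon_n$ is attained (a detail the paper glosses over) — all remain within the same argument.
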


\begin{proof}
See Section \ref{sec:Proof of Nash is limit}.    
\end{proof}

Again, the same result holds for sequences of large network games.
\begin{corollary}\label{coro:Nash is limit:network}
    Let \( \mathcal{G} \) be a graphon game and \( f \) be a Nash equilibrium of \( \mathcal{G} \).  
    Consider sequences \( \{\mathcal{G}_n\} \) and \( \{s_n\} \) of network games and strategy profiles such that \(s_n\) is a strategy profile of \(\mathcal{G}_n\) for all \(n \in \mathbf{N}\), \( \mathcal{G}_n \to \mathcal{G} \), and
    \( f_{s_n} \overset{a.e.}{\longrightarrow} f \).
    Then, there exists a sequence \( \{\epsilon_n\} \) of nonnegative numbers such that \( s_n \) is an \( \epsilon_n \)-Nash equilibrium of \( \mathcal{G}_n \) for all $n \in \mathbf{N}$, and \( \epsilon_n \to 0 \).
\end{corollary}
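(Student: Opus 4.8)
The plan is to reduce Corollary \ref{coro:Nash is limit:network} to Theorem \ref{theorem:Nash is limit} via the same step-function embedding of network games into graphon games that underlies the proof of Corollary \ref{coro:limit is Nash:network}, only now invoking the converse direction.

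First I would write \( \mathcal{G}_n = ( I_n, (A_n(i,j))_{i,j\in I_n}, (v_n(i))_{i \in I_n} )\) and pass to the associated graphon games \( \mathcal{G}'_n = (W_n, U_n) \) defined by \( (W_n, U_n) = (W_{A_n}, U_{v_n}) \), together with the step-function strategy profiles \( f_n = f_{s_n} \) of \( \mathcal{G}'_n \). By the definition of convergence of network games to a graphon game, the hypothesis \( \mathcal{G}_n \to \mathcal{G} \) unpacks into \( W_{A_n} \overset{a.e.}{\longrightarrow} W \) and \( U_{v_n} \overset{a.e.}{\longrightarrow} U \), which is exactly \( \mathcal{G}'_n \to \mathcal{G} \) in the sense of convergence of graphon games. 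Likewise the hypothesis \( f_{s_n} \overset{a.e.}{\longrightarrow} f \) says \( f_n \overset{a.e.}{\longrightarrow} f \).

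Since \( f \) is a Nash equilibrium of \( \mathcal{G} \), Theorem \ref{theorem:Nash is limit} applies to the sequences \( \{\mathcal{G}'_n\} \) and \( \{f_n\} \) and produces a sequence \( \{\epsilon_n\} \) of nonnegative numbers with \( \epsilon_n \to 0 \) such that \( f_n \) is an \( \epsilon_n \)-Nash equilibrium of \( \mathcal{G}'_n \) for every \( n \in \mathbf{N} \). It then remains to transport this conclusion back to the network games. As established in the proof of Corollary \ref{coro:limit is Nash:network}, because \( W_{A_n} \) and \( f_{s_n} \) are \( n \)-step functions, the graphon local aggregate \( \int W_{A_n}(t,s) f_{s_n}(s)\, d\mu(s) \) reduces, for \( t \in ((i-1)/n, i/n] \), to the local aggregate of player \( i \) in \( \mathcal{G}_n \); consequently the agent set appearing in the graphon \( \epsilon_n \)-equilibrium condition is the union over the qualifying players \( i \) of the intervals \( ((i-1)/n, i/n] \). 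Since each such interval has measure \( 1/n \), its Lebesgue measure equals the proportion of qualifying players, so the graphon \( \epsilon_n \)-equilibrium condition for \( f_n \) in \( \mathcal{G}'_n \) and the network \( \epsilon_n \)-equilibrium condition for \( s_n \) in \( \mathcal{G}_n \) are the same inequality. Hence \( s_n \) is an \( \epsilon_n \)-Nash equilibrium of \( \mathcal{G}_n \) for all \( n \in \mathbf{N} \), completing the proof.

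The only substantive point is this final identification of the two \( \epsilon_n \)-equilibrium conditions through the step-function representation; but it is precisely the computation already carried out verbatim in the proof of Corollary \ref{coro:limit is Nash:network} and introduces no new difficulty, so I anticipate no real obstacle. The whole argument is simply the dual of that corollary, reading the equivalence ``network condition \( \Longleftrightarrow \) graphon condition'' together with Theorem \ref{theorem:Nash is limit} in place of Theorem \ref{theorem:limit is Nash}.
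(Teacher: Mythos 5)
Your proposal is correct and is essentially identical to the paper's own proof: you embed the network games as step-function graphon games $\mathcal{G}'_n = (W_{A_n}, U_{v_n})$ with profiles $f_n = f_{s_n}$, apply Theorem \ref{theorem:Nash is limit} to obtain the sequence $\{\epsilon_n\}$, and transport the $\epsilon_n$-equilibrium property back to the network games by identifying the fraction of qualifying players with the Lebesgue measure of the corresponding union of intervals. The only (harmless) difference is that you spell out this step-function identification explicitly, whereas the paper simply cites the same computation from the proof of Corollary \ref{coro:limit is Nash:network}.
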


\begin{proof}
Let \( \mathcal{G}_n = ( I_n, (A_n(i,j))_{i,j\in I_n}, (v_n(i))_{i \in I_n} )\).
Define a sequence  \( \{\mathcal{G}'_n = (W_n, U_n)\}\) of graphon games as \((W_n, U_n) = (W_{A_n}, U_{v_n})\) for all \(n \in \mathbf{N}\), and define a strategy profile \( f_n \) of \( \mathcal{G}'_n \) as \( f_n = f_{s_n} \).
By assumption, we have \( \mathcal{G}'_n \to \mathcal{G} \) and \( f_n \overset{a.e.}{\longrightarrow} f \).
Therefore, by Theorem \ref{theorem:Nash is limit}, there exists a sequence \(\{ \epsilon_n \}\) of nonnegative numbers such that \( f_n \) is an \( \epsilon_n \)-Nash equilibrium of \( \mathcal{G}'_n \) for all $n \in \mathbf{N},$ and \( \epsilon_n \to 0 \).  

As in the proof of Corollary \ref{coro:limit is Nash:network}, 
\[
\frac{| \{ i \in I_n : v_n(i)(s_n(i), \sum_{j \in I_n} A_n(i,j) s_n(j)) \geq \max_{a \in S} v_n(i)(a, \sum_{j \in I_n} A_n(i,j) s_n(j)) - \epsilon_n \} |}{|I_n|} 
\]
is equal to
\[
\mu \left( \left\{ t \in (0,1] : U_n(t)(f_n(t), \int W_n(t,s) f_n(s) d\mu(s)) \geq \max_{a \in S} U_n(t)(a, \int W_n(t,s) f_n(s) d\mu(s)) - \epsilon_n \right\} \right).
\] 
Therefore, \( s_n \) is an \( \epsilon_n \)-Nash equilibrium of \( \mathcal{G}_n \).

\end{proof}
Summarizing the results so far, we obtain the following theorem. Intuitively, this theorem states that a strategy profile of a graphon game is a Nash equilibrium if and only if it is the limit of a sequence of asymptotic Nash equilibria in large network games (see \(1 \iff 2\) in this theorem). It also states that for any graphon game and its Nash equilibrium, every sequence of network games converging to the graphon game has a corresponding sequence of asymptotic Nash equilibria that converges to the Nash equilibrium (see \(1 \implies 3\) in this theorem).
\begin{theorem}\label{theorem:charactarization}
    Let \( \mathcal{G} \) be a graphon game and \( f \) be a strategy profile. Then, the following conditions are equivalent:
    \begin{enumerate}
        \item \( f \) is a Nash equilibrium of \( \mathcal{G} \).
        \item There exist sequences \( \{\mathcal{G}_n\} \), \( \{s_n\} \), and \( \{\epsilon_n\} \) of network games, strategy profiles, and nonnegative numbers that satisfy the following properties:
        \begin{itemize}
            \item \( s_n \) is an \( \epsilon_n \)-Nash equilibrium of \( \mathcal{G}_n \) for all \( n \),
            \item \( \mathcal{G}_n \to \mathcal{G} \), \( f_{s_n} \overset{a.e.}{\longrightarrow} f \), and \( \epsilon_n \to 0 \).
        \end{itemize}
        \item For any sequence \( \{\mathcal{G}_n\} \) of network games such that \( \mathcal{G}_n \to \mathcal{G} \), there exist sequences \( \{s_n\} \) and \( \{\epsilon_n\} \) of strategy profiles and nonnegative numbers that satisfy the following properties:
        \begin{itemize}
            \item \( s_n \) is an \( \epsilon_n \)-Nash equilibrium of \( \mathcal{G}_n \) for all \( n \),
            \item \( f_{s_n} \overset{a.e.}{\longrightarrow} f \) and \( \epsilon_n \to 0 \).
        \end{itemize}
    \end{enumerate}
\end{theorem}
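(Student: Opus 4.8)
The plan is to establish the three-way equivalence cyclically as \(1 \implies 3 \implies 2 \implies 1\), since the substantive analytic content has already been packaged into Lemma~\ref{lem:approximation graphon game}, Corollary~\ref{coro:limit is Nash:network}, and Corollary~\ref{coro:Nash is limit:network}. The theorem is essentially a synthesis of these results, so the proof reduces to invoking them in the correct order and verifying that their hypotheses match the conditions in the statement.

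For \(1 \implies 3\), I would assume \(f\) is a Nash equilibrium of \(\mathcal{G}\) and fix an arbitrary sequence \(\{\mathcal{G}_n\}\) of network games with \(\mathcal{G}_n \to \mathcal{G}\). The latter part of Lemma~\ref{lem:approximation graphon game} then supplies a sequence \(\{s_n\}\) of strategy profiles, with \(s_n\) a strategy profile of \(\mathcal{G}_n\), satisfying \(f_{s_n} \overset{a.e.}{\longrightarrow} f\). Applying Corollary~\ref{coro:Nash is limit:network} to this data yields a sequence \(\{\epsilon_n\}\) of nonnegative numbers with \(\epsilon_n \to 0\) such that each \(s_n\) is an \(\epsilon_n\)-Nash equilibrium of \(\mathcal{G}_n\). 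This is exactly condition 3.

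For \(3 \implies 2\), I would invoke the first part of Lemma~\ref{lem:approximation graphon game} to produce at least one sequence \(\{\mathcal{G}_n\}\) of network games with \(\mathcal{G}_n \to \mathcal{G}\); specializing the universally quantified statement in condition 3 to this particular sequence produces the sequences \(\{s_n\}\) and \(\{\epsilon_n\}\) asserted to exist in condition 2. Finally, for \(2 \implies 1\), the hypotheses of condition 2 are precisely the hypotheses of Corollary~\ref{coro:limit is Nash:network}, so that corollary immediately gives that \(f\) is a Nash equilibrium of \(\mathcal{G}\), establishing condition 1 and closing the cycle.

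I do not expect a genuine obstacle here, since every hard step—the approximation of a graphon game by network games, the passage to the limit of asymptotic equilibria, and its converse—is already resolved in the cited results. The only point requiring care is to ensure that the implication \(3 \implies 2\) is not vacuously satisfied: condition 3 quantifies over all converging sequences, so it is essential to note that the existence part of Lemma~\ref{lem:approximation graphon game} guarantees such a sequence actually exists before specializing. With that observation in place, the cyclic argument is complete.
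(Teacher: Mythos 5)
Your proposal is correct and takes essentially the same approach as the paper: the paper likewise proves the cycle, handling \(2 \implies 1\) by Corollary \ref{coro:limit is Nash:network}, \(1 \implies 3\) by the latter part of Lemma \ref{lem:approximation graphon game} combined with Corollary \ref{coro:Nash is limit:network}, and \(3 \implies 2\) by instantiating the universally quantified condition 3 at the sequence of network games supplied by the existence part of Lemma \ref{lem:approximation graphon game}. Your remark that \(3 \implies 2\) needs that existence part to avoid vacuity is exactly the step the paper's proof takes.
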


\begin{proof}
2\(\implies\)1: This is Corollary \ref{coro:Nash is limit:network}.

1\(\implies\)3:  
Let \( \{ \mathcal{G}_n \} \) be a sequence of network games such that \( \mathcal{G}_n \to \mathcal{G} \). By the latter part of Lemma \ref{lem:approximation graphon game}, there exists a sequence \( \{ s_n \} \) of strategy profiles such that \( s_n \) is a strategy profile of \( \mathcal{G}_n \) for all \( n \in \mathbf{N} \), and \( f_{s_n} \overset{a.e.}{\longrightarrow} f \). Since \( f \) is a Nash equilibrium of \( \mathcal{G} \), it follows from Corollary \ref{coro:Nash is limit:network} that there exists a sequence \( \{ \epsilon_n \} \) of nonnegative numbers such that \(  s_n \) is an \( \epsilon_n \)-Nash equilibrium of \( \mathcal{G}_n \) for all \( n \in \mathbf{N} \), and \( \epsilon_n \to 0 \). The sequences \( \{ s_n \} \) and \( \{ \epsilon_n \} \) satisfy the required properties.

3\(\implies\)2:  
For the graphon game \( \mathcal{G} \), it follows from Lemma \ref{lem:approximation graphon game} that there exists a sequence \( \{ \mathcal{G}_n \} \) of network games such that \( \mathcal{G}_n \to \mathcal{G} \). By condition 3, there exist sequences \( \{ s_n \} \) and \( \{ \epsilon_n \} \) of strategy profiles and nonnegative numbers, respectively, such that \( s_n \) is an \( \epsilon_n \)-Nash equilibrium of \( \mathcal{G}_n \) for all \( n \in \mathbf{N} \), \( f_{s_n} \overset{a.e.}{\longrightarrow} f \), and \( \epsilon_n \to 0 \). The sequences \( \{ \mathcal{G}_n \} \), \( \{ s_n \} \), and \( \{ \epsilon_n \} \) satisfy the required properties.

\end{proof}

\section{Concluding remarks}
\label{sec:concluding remarks}
Two important questions remain for future research. The first concerns the conditions under which the convexity of players' strategy sets and the quasi-concavity of their utility functions can be removed in the equilibrium existence result. In the case of the complete graphon, i.e., \( W(t, s) \equiv 1 \), each player's payoff depends on their own strategy and the average strategy of others. In this case, under certain conditions on the space of agents (see \cite{RefWorks:RefID:423-khan1999non-cooperative}, \cite{RefWorks:RefID:274-sun2015pure-strategy}, and \cite{RefWorks:RefID:70-he2022conditional}), these convexity and quasi-concavity assumptions can be relaxed. Further investigation is required to identify the types of graphons for which these conditions can be removed (see also Proposition 1' in \cite{Rokade2023}). The second question is whether Nash equilibria in graphon games can be characterized as the limits of Nash equilibria in network games. While it is known from Corollary \ref{coro:limit is Nash:network} that the limit of a sequence of Nash equilibria in network games is a Nash equilibrium in the graphon game, it remains uncertain whether for any graphon game and its Nash equilibrium, there exists a sequence of network games that converges to the graphon game and has Nash equilibria converging to the Nash equilibrium of the graphon game.
\section{Appendix}

\subsection{Proof of Theorem \ref{the:main}}\label{sec:proof of existence theorem}
 Define $B(t, e) = \mathrm{argmax}_{a \in S(t)} U(t,a,e).$
\begin{lemma}
    For every $t \in T,$ the correspondence $B(t, \cdot): E \twoheadrightarrow S(t)$ is nonempty, convex valued, and upper hemicontinuous in the weak topology of $E.$
\end{lemma}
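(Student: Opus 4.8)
The plan is to treat the three assertions separately, since each flows from a different standing assumption, and to derive the only substantive one — upper hemicontinuity — from the Closed Graph Theorem rather than from a sequential argument, because the weak topology need not be metrizable.

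Nonemptiness and convex-valuedness are quick. Fix $t$ and $e$. The slice $a\mapsto U(t,a,e)$ is weakly continuous on $S(t)$ by A.\ref{assum:continuity}, and $S(t)$ is weakly compact by A.\ref{assum:strategy}; so the Weierstrass extreme value theorem gives a maximizer, whence $B(t,e)\neq\emptyset$. Writing $m=\max_{a\in S(t)}U(t,a,e)$, we have $B(t,e)=\{a\in S(t):U(t,a,e)\ge m\}$, which is a superlevel set of the quasi-concave function $U(t,\cdot,e)$ (A.\ref{assum:quasi}) intersected with the convex set $S(t)$ (A.\ref{assum:strategy}), hence convex.

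For upper hemicontinuity I use the Closed Graph Theorem for correspondences: since every value $B(t,e)$ lies in the weakly compact set $S(t)$, it suffices to show that the graph $\{(e,a):a\in B(t,e)\}$ is closed in $E\times S(t)$ for the product of the weak topologies. So let a net $(e_\alpha,a_\alpha)$ with $a_\alpha\in B(t,e_\alpha)$ converge weakly to $(e,a)$; then $a\in S(t)$ because $S(t)$ is weakly closed. For each $b\in S(t)$ the inequality $U(t,a_\alpha,e_\alpha)\ge U(t,b,e_\alpha)$ holds; passing to the limit along the net and invoking the joint weak continuity A.\ref{assum:continuity} on both sides yields $U(t,a,e)\ge U(t,b,e)$, so $a\in B(t,e)$ and the graph is closed.

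The delicate point, and the reason the argument must be phrased with nets rather than sequences, is that the weak topology on $E$ is in general not metrizable, so sequential closedness would be insufficient. What makes the closed-graph route legitimate nonetheless is that the range of $B(t,\cdot)$ is contained in the single weakly compact set $S(t)$; this compactness (A.\ref{assum:strategy}) is precisely the hypothesis the Closed Graph Theorem needs to promote a closed graph to upper hemicontinuity. Equivalently, one could cite a topological version of Berge's maximum theorem, whose premises — a jointly weakly continuous objective (A.\ref{assum:continuity}) and a constant, hence trivially continuous, weakly compact constraint correspondence (A.\ref{assum:strategy}) — are exactly what is available here.
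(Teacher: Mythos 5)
Your proof is correct and coincides with the paper's intent: the paper's own ``proof'' of this lemma is the single line that the result immediately follows from A.2--A.5, and your argument (Weierstrass on the weakly compact set $S(t)$ for nonemptiness, superlevel sets of the quasi-concave slice for convexity, and the closed-graph theorem for correspondences with compact Hausdorff range---equivalently Berge's maximum theorem with the constant constraint correspondence $e \mapsto S(t)$---for upper hemicontinuity) is exactly the standard elaboration of that assertion, with the use of nets correctly handling the non-metrizability of the weak topology on $E$.
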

\begin{proof}
    The result immediately follows from A.\ref{assum:strategy}-A.\ref{assum:quasi}.
\end{proof}

\begin{lemma}\label{proposition: measurability}
Let $(T, \Sigma^T, \mu), (S, \Sigma^S, \nu)$ be finite complete measure spaces and $f:T \times S \to E$ a Bochner integrable function. Then, there exists a measurable set $T' \in \Sigma^T$ with $\mu (T \setminus T') = 0$ such that
    \begin{itemize}
        \item for each $t \in T',$ $f_t:S \to E \ (s \mapsto f(t,s) )$ is Bochner integrable;
        \item $I_T (t) = 
        \begin{cases}
        \int_S f_t(s) d\nu(s) & t \in T' \\
        0 & otherwise
        \end{cases}
        $ 
        is $\Sigma^T-\mathcal{B}(E)$ measurable.\\
    \end{itemize}
\end{lemma}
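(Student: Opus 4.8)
The plan is to prove this as a Fubini-type theorem for the Bochner integral, reducing everything to the classical scalar Fubini--Tonelli theorem by means of Pettis' measurability theorem. Throughout, let $\lambda$ denote the completed product measure $\mu \otimes \nu$, with respect to which $f$ is Bochner integrable; since a Bochner integrable function is essentially separably valued, I may assume without loss of generality that $E$ is separable, so that weak, strong, and graph measurability coincide.

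First I would establish that the sections $f_t$ are Bochner integrable for a.e. $t$. The scalar function $(t,s) \mapsto \|f(t,s)\|$ is nonnegative and $\lambda$-integrable, so Tonelli's theorem yields a set $T' \in \Sigma^T$ with $\mu(T \setminus T') = 0$ such that $s \mapsto \|f(t,s)\|$ is integrable for every $t \in T'$. To upgrade this to strong measurability of $f_t$, I would take a sequence $\{\phi_n\}$ of simple functions with $\phi_n \to f$ $\lambda$-a.e.; applying the scalar Fubini theorem to the indicator of the null set $\{(t,s): \phi_n(t,s) \not\to f(t,s)\}$ shows that for a.e. $t$ its $t$-section is $\nu$-null, so $(\phi_n)_t \to f_t$ $\nu$-a.e. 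Since each $(\phi_n)_t$ is simple, $f_t$ is strongly measurable, and combined with integrability of $\|f_t\|$, the section $f_t$ is Bochner integrable on a full-measure subset of $T'$, which I may take to be $T'$ itself.

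Next I would prove measurability of $I_T$ via Pettis' theorem. For each $x^* \in E^*$, bounded linear functionals commute with the Bochner integral (Hille's theorem), so $x^*(I_T(t)) = \int_S x^*(f(t,s))\, d\nu(s)$ on $T'$; the integrand is scalar and $\lambda$-integrable, hence $t \mapsto x^*(I_T(t))$ is measurable by scalar Fubini, giving weak measurability of $I_T$. For essential separable-valuedness, choose a separable closed subspace $E_0$ containing $f(t,s)$ for $\lambda$-a.e. $(t,s)$; by Fubini again $f_t(s) \in E_0$ for $\nu$-a.e. $s$ for a.e. $t$, and since the Bochner integral of an $E_0$-valued function lies in the closed subspace $E_0$, we get $I_T(t) \in E_0$ a.e. Pettis' measurability theorem then delivers strong, hence $\Sigma^T$--$\mathcal{B}(E)$, measurability of $I_T$. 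An alternative route avoiding Pettis is to set $I_T^{(n)}(t) = \int_S (\phi_n)_t\, d\nu$, which is manifestly measurable for simple $\phi_n$, and to observe that $\|I_T^{(n)}(t) - I_T(t)\| \le \int_S \|\phi_n(t,s) - f(t,s)\|\, d\nu(s) \to 0$ in $L^1(\mu)$ by Fubini, so that $I_T$ is an a.e. limit of measurable functions along a subsequence.

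The hard part will be the passage from strong measurability of $f$ on the product $T \times S$ to strong measurability of its sections: this does not hold for free, and the essential device is precisely to apply the scalar Fubini theorem to the null set on which the approximating simple functions fail to converge, concluding that almost every section of a null set is null. Once this sectioning step is in place, every remaining ingredient is a routine combination of scalar Fubini, Hille's theorem, and the invariance of the Bochner integral under closed subspaces.
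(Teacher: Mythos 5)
Your proof is correct and follows essentially the same route as the paper's: reduce to separable $E$, apply scalar Fubini--Tonelli to $\|f\|$ to obtain Bochner integrability of almost every section, and get $\Sigma^T$--$\mathcal{B}(E)$ measurability of $I_T$ from weak measurability (functionals commute with the Bochner integral, then scalar Fubini together with completeness of $\mu$) combined with Pettis' theorem. The only difference is one of detail, not of method: where the paper simply asserts that each section $f_t$ is measurable and invokes Pettis, you derive strong measurability of $f_t$ from the approximating simple functions via the a.e.-null-sections argument --- a careful refinement that addresses the completion subtleties the paper glosses over, with your simple-function $L^1$-approximation of $I_T$ offered as a harmless Pettis-free alternative.
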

\begin{proof}
Since $f$ is essentially separably valued, we may assume that $E$ is separable. For all $t \in T$, since $f_t$ is measurable, it is weakly measurable, and thus strongly measurable by Pettis's measurability theorem.

Since $f$ is Bochner integrable, $\|f\|$ is Lebesgue integrable. It follows from Fubini’s theorem for Lebesgue integration that
\begin{equation*}
    \int_S \|f_t(s)\| d\nu(s) < \infty
\end{equation*}
for a.e. $t \in T$.
Thus, there exists a measurable set $T' \in \Sigma^T$ with $\mu(T \setminus T') = 0$ such that $f_t$ is Bochner integrable for all $t \in T'$. Define 
$I_T (t) = \int_S f_t(s) d\nu (s)$ if $t \in T'$ and $0$ otherwise.

To prove the measurability of $I_T$, it is sufficient to prove its weak measurability. Take $x^* \in E^*$ arbitrarily. Since $x^*$ is a bounded operator and $f$ is Bochner integrable, $x^* \circ f$ is Lebesgue integrable. Hence, again by Fubini’s theorem, we can define $g_{x^*} (t) = \int_S (x^* \circ f)_t(s) d\nu(s)$ for a.e. $t \in T$, and $g_{x^*}$ is measurable. Meanwhile, note that $(x^* \circ I_T)(t) = \int_S (x^* \circ f_t)(s) d \nu(s)$ for all $t \in T'$. Therefore, $g_{x^*}$ and $x^* \circ I_T$ coincide outside a null set. Since $X$ is complete, the measurability of $g_{x^*}$ implies that of $x^* \circ I_T$.
\end{proof}

Define $\mathbb{W}:T \times L^1 (\mu,E) \to E$ as $\mathbb{W} (t, f) = \int_T W(t,s) f(s) d \mu (s).$
Then, $\mathbb{W}(t, \cdot): L^1 (\mu,E) \to E$ is continuous in the respective weak topologies for every $t \in T.$
For every $f \in L^1(\mu,E)$, since $W(t, s) f(s)$ is Bohoner integrable on $T \times T,$ and since $(T, \Sigma, \mu)$ is complete, it follows from Lemma \ref{proposition: measurability} that $\mathbb{W} (\cdot, f): T \to E$ is $\Sigma - \mathcal{B}(E)$ measurable.

\begin{lemma}
        $B(\cdot, \mathbb{W}(\cdot, f)): T \twoheadrightarrow E$ has a measurable graph for every $f \in L^1(\mu,E).$
\end{lemma}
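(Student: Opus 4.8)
The plan is to realize the graph of $t \mapsto B(t, \mathbb{W}(t,f))$ as a superlevel set of a jointly measurable integrand, and to verify measurability of the associated value function. Fix $f \in L^1(\mu, E)$ and write $e(t) = \mathbb{W}(t,f)$, which is $\Sigma - \mathcal{B}(E)$ measurable by the remark following Lemma \ref{proposition: measurability}. Since $B(t,e) = \mathrm{argmax}_{a \in S(t)} U(t,a,e)$ and $S(t)$ is weakly compact with $U(t,\cdot,e(t))$ weakly continuous, the supremum is attained, so the graph equals
\[
G = \{(t,a) \in G_S : U(t,a,e(t)) \geq v(t)\}, \qquad v(t) := \sup_{a' \in S(t)} U(t,a',e(t)),
\]
where the maximum has been replaced by the inequality ``$\geq v(t)$'' because the reverse inequality is automatic. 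It therefore suffices to show that the composite integrand $\tilde U(t,a) := U(t,a,e(t))$ is measurable on $G_S$ and that the value function $v$ is $\Sigma$-measurable.

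First I would establish measurability of $\tilde U$ by a Carath\'eodory argument. By A.6, $U(\cdot,\cdot,e)$ is $\Sigma \otimes \mathcal{B}(E)$ measurable for each fixed $e$, and by A.\ref{assum:continuity} the map $e \mapsto U(t,a,e)$ is weakly continuous, hence norm continuous (the norm topology is finer), for each fixed $(t,a)$. Thus $U$, restricted to $G_S \times E$, is separately measurable in $(t,a)$ and continuous in $e$; since $E$ is a separable metric space in the norm topology and $\mathcal{B}(E)$ is the norm Borel $\sigma$-algebra, the standard Carath\'eodory lemma gives joint measurability of $U$ with respect to the product $\sigma$-algebra on $G_S \times E$. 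As $(t,a) \mapsto ((t,a), e(t))$ is measurable from $G_S$ into $G_S \times E$, the composition $\tilde U$ is measurable on $G_S$.

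Next I would prove measurability of $v$ by reducing the uncountable supremum to a countable one. Under A.\ref{assum:strategy} and graph measurability (A.3), $S$ is a nonempty, norm-closed-valued multifunction (weak compactness and convexity yield norm closedness) that is measurable, because $(T,\Sigma,\mu)$ is complete and $E$ is separable, so by the selection theory recalled in Section \ref{subsec: Mathematical preliminaries} it admits a Castaing representation: a countable family $\{\sigma_k\}$ of measurable selections with $\{\sigma_k(t)\}$ norm-dense, hence weakly dense, in $S(t)$ for every $t$. Since $U(t,\cdot,e(t))$ is weakly continuous, the supremum over $S(t)$ coincides with the supremum over the weakly dense set, so $v(t) = \sup_k \tilde U(t,\sigma_k(t))$. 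Each $t \mapsto \tilde U(t,\sigma_k(t))$ is measurable, being the composition of the measurable map $t \mapsto (t,\sigma_k(t))$ into $G_S$ with the measurable $\tilde U$; thus $v$ is a countable supremum of measurable functions and hence $\Sigma$-measurable.

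Finally, $(t,a) \mapsto \tilde U(t,a) - v(t)$ is measurable on $G_S$, so $G = \{(t,a) \in G_S : \tilde U(t,a) - v(t) \geq 0\}$ lies in the trace of $\Sigma \otimes \mathcal{B}(E)$ on $G_S$; since $G_S \in \Sigma \otimes \mathcal{B}(E)$ by A.3, it follows that $G \in \Sigma \otimes \mathcal{B}(E)$, which is the claim. I expect the main obstacle to be the measurability of the value function $v$: this is the step where one must pass from the pointwise maximum over the uncountable, merely graph-measurable set $S(t)$ to a countable supremum, and it is precisely here that the Castaing representation, together with the weak continuity of $U$ in its action argument, is indispensable. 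The Carath\'eodory step is comparatively routine once one notes that weak continuity in $e$ upgrades to norm continuity, matching the norm-generated $\mathcal{B}(E)$.
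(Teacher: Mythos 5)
Your proof is correct, but it takes a genuinely different route from the paper's at the one nontrivial step. Both arguments share the same skeleton: the composed integrand $\tilde U(t,a) = U(t,a,\mathbb{W}(t,f))$ is shown $\Sigma \otimes \mathcal{B}(E)$-measurable on $G_S$ by a Carath\'eodory argument (the paper invokes Lemma 4.52 of Aliprantis--Border to compose directly, whereas you first prove joint measurability of $U$ on $G_S \times E$ and then compose with $(t,a) \mapsto ((t,a), \mathbb{W}(t,f))$ --- same substance), and both conclude by writing the graph of the best-response correspondence as the set of $(t,a) \in G_S$ where $\tilde U(t,a) - v(t) \geq 0$, with $v(t) = \sup_{a \in S(t)} \tilde U(t,a)$. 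The difference is how $v$ is shown measurable. The paper uses the projection theorem: $\{t : v(t) > c\}$ is the projection onto $T$ of the measurable set $\{(t,a) \in G_S : \tilde U(t,a) > c\}$, and this projection is measurable because $(T,\Sigma,\mu)$ is complete. You instead upgrade $S$ from graph-measurable to measurable (valid, since weak compactness already gives norm closedness of the values, and the equivalence for closed-valued correspondences holds under completeness and separability), extract a Castaing representation $\{\sigma_k\}$, and write $v(t) = \sup_k \tilde U(t,\sigma_k(t))$ as a countable supremum, using the weak continuity of $U$ in the action variable to pass from the dense family to the full supremum. The two routes are of comparable depth and neither escapes the completeness hypothesis --- the projection theorem and the graph-measurability-implies-measurability equivalence both rest on it --- but the paper's is shorter, while yours is more constructive (it exhibits the countable family of selections realizing the supremum) and makes explicit two points the paper leaves implicit: the attainment of the supremum, and the weak-to-norm continuity upgrade that reconciles the weak-topology continuity assumption with the norm-generated Borel $\sigma$-algebra $\mathcal{B}(E)$.
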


\begin{proof}
    Since $U: G_{S} \times E \to E$ is a Carathéodory map and $\mathbb{W} (\cdot, f): T \to E$ is $\Sigma - \mathcal{B}(E)$ measurable, $U(\cdot, \cdot, \mathbb{W} (\cdot, f)): G_S \to E$ is $ \Sigma \otimes \mathcal{B}(E) - \mathcal{B}(E)$ measurable \citep[Lemma 4.52]{RefWorks:RefID:7-aliprantis2006infinite}.

    The rest is almost identical to the proof of Proposition 3  of Hildenbrand (1974, p.60). We first show that for every $c \in \mathbf{R},$ the set 
    $$A = \{t \in T : \mathrm{sup}_{a \in S(t)} U(t, a, \mathbb{W} (t, f)) > c\}$$
    belongs to $\Sigma.$
    This follows from the projection theorem \cite[p.44]{Hildenbrand1977} because $(T, \Sigma, \mu)$ is complete and 
    $$A = \mathrm{proj}_I \{(t,a) \in G_{S}: U(t,a,\mathbb{W} (t, f)) > c \},$$
    where $\mathrm{proj}_I(B)$ means the projection of $B \subset T \times E$ to $T.$
    Since the function $$(t,a) \mapsto U(t,a,\mathbb{W} (t, f)) - \mathrm{sup}_{a \in S(t)}U(t, a, \mathbb{W} (t, f))$$ is $\Sigma \otimes \mathcal{B}(E)-\mathcal{B}(\mathbf{R})$ measurable, The graph
        $$G_{B(\cdot, \mathbb{W} (\cdot, f))}
        = \{(t,a) \in G_{S}: U(t,a,\mathbb{W} (t, f)) = \mathrm{sup}_{a \in S(t)}U(t, a, \mathbb{W} (t, f))\}$$
    belongs to  $\Sigma \times \mathcal{B}(E).$
\end{proof}

Define $\phi: \mathcal{S}_S^1 \twoheadrightarrow \mathcal{S}_S^1$ as $\phi (g) = \{f \in \mathcal{S}_S^1 : f(t) \in B(t, \mathbb{W} (t,g)) \text{ for a.e. } t \in T \}.$ 
Since $S: I \twoheadrightarrow E$ is nonempty, convex, weakly compact valued and integrably bounded, $\mathcal{S}_S^1 \subset L^1(\mu,E)$ is weakly compact by Diestel's theorem \citep[Theorem 3.1]{Yannelis1991}.

\begin{lemma}\label{lem:upper-himicontinuity}
    $\phi$ is nonempty and convex valued and upper hemicontinuous in the weak topology of $\mathcal{S}_S^1.$
\end{lemma}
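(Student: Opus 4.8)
The plan is to establish the three properties in turn, treating nonemptiness and convex-valuedness as routine and concentrating on upper hemicontinuity. For nonemptiness, fix $g \in \mathcal{S}_S^1$. The correspondence $t \mapsto B(t, \mathbb{W}(t,g))$ is nonempty valued (since each $B(t,\cdot)$ is nonempty by the first lemma of this subsection), takes values in $S(t)$ and is therefore integrably bounded, and has a measurable graph by the preceding lemma. Hence the measurable selection result recalled in Section~\ref{subsec: Mathematical preliminaries} yields a Bochner integrable selection, which lies in $\phi(g)$. For convex-valuedness, if $f_0, f_1 \in \phi(g)$ and $\lambda \in [0,1]$, then for a.e.\ $t$ both $f_0(t), f_1(t) \in B(t, \mathbb{W}(t,g))$; since $B(t, \cdot)$ is convex valued (a consequence of the quasi-concavity A.\ref{assum:quasi}) and $S(t)$ is convex (A.\ref{assum:strategy}), the combination $\lambda f_0 + (1-\lambda) f_1$ again lies in $\phi(g)$.

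The heart of the argument is upper hemicontinuity, and here I would first reduce to a sequential closed-graph condition. Because $E$ is separable and $(T,\Sigma,\mu)$ is essentially countably generated (A.\ref{assum:essentially countably generated}), the space $L^1(\mu,E)$ is separable; a countable norming family in its dual then separates points, so the weakly compact set $\mathcal{S}_S^1$ is metrizable in its weak topology. Since $\mathcal{S}_S^1$ is also the weakly compact range of $\phi$, the correspondence is upper hemicontinuous with weakly closed values precisely when its graph is weakly closed, and by metrizability it suffices to show the graph is sequentially weakly closed. Concretely, I must show that whenever $g_n \rightharpoonup g$ and $f_n \rightharpoonup f$ in $\mathcal{S}_S^1$ with $f_n \in \phi(g_n)$, one has $f \in \phi(g)$. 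That $f \in \mathcal{S}_S^1$ is immediate from weak closedness of $\mathcal{S}_S^1$, so the real content is the pointwise membership $f(t) \in B(t, \mathbb{W}(t,g))$ for a.e.\ $t$.

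This pointwise transfer is the main obstacle, since weak $L^1$-convergence gives no direct pointwise control of the $f_n$. The plan is to combine three ingredients. First, continuity of $\mathbb{W}(t,\cdot)$ in the respective weak topologies gives $e_n(t) := \mathbb{W}(t,g_n) \rightharpoonup e(t) := \mathbb{W}(t,g)$ in $E$ for every $t$. Second, I regain pointwise information through Mazur's lemma: for each $N$, $f$ lies in the norm-closed convex hull of $\{f_n : n \ge N\}$, so one can choose $h_N \in \mathrm{co}\{f_n : n \ge N\}$ with $\|h_N - f\|_1 \to 0$, and after passing to a subsequence $h_N(t) \to f(t)$ for a.e.\ $t$. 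Third, I use upper hemicontinuity of $B(t,\cdot)$ together with local convexity of the weak topology of $E$: given any convex weakly open $V \supseteq B(t,e(t))$, upper hemicontinuity and $e_n(t) \rightharpoonup e(t)$ force $B(t, e_n(t)) \subseteq V$ for all large $n$, whence each tail convex combination $h_N(t)$ lies in $\overline{V}$ and therefore so does the limit $f(t)$. Because $B(t, e(t))$ is weakly compact and convex, it equals the intersection of the closures of its convex weakly open neighborhoods, so intersecting over $V$ yields $f(t) \in B(t, e(t))$ for a.e.\ $t$. This gives $f \in \phi(g)$ and closes the graph.

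I expect the delicate points to be exactly the two infinite-dimensional subtleties above: justifying weak metrizability of $\mathcal{S}_S^1$ (so that sequences suffice for upper hemicontinuity) and executing the Mazur-plus-local-convexity step that converts weak $L^1$-convergence of the selections into almost-everywhere membership in the limit best-response correspondence. An alternative to the explicit Mazur argument would be to invoke a Banach-valued convergence theorem for correspondences (in the style of Diestel and Yannelis) asserting that a weak $L^1$-limit of selections lies a.e.\ in the convex weak upper limit of the correspondences; I prefer the manual argument because it keeps the proof self-contained.
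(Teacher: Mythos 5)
Your proposal is correct and takes essentially the same route as the paper: reduce to a sequential closed-graph condition via metrizability of the weakly compact set $\mathcal{S}_S^1$ in the separable space $L^1(\mu,E)$, recover a.e.\ pointwise convergence through Mazur's lemma applied to convex hulls of tails $\{f_k : k \ge N\}$, and then combine weak continuity of $\mathbb{W}(t,\cdot)$ with upper hemicontinuity of $B(t,\cdot)$ to place $f(t)$ in $B(t,\mathbb{W}(t,g))$. The only cosmetic difference is the final step, where the paper argues by contradiction with a single separating hyperplane $\Lambda$ and the half-space $[\Lambda<\alpha]$, whereas you intersect the closures of all convex weakly open neighborhoods of $B(t,\mathbb{W}(t,g))$ --- an equivalent device, since that intersection identity is itself proved by the same Hahn--Banach separation.
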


\begin{proof}
 The nonemptiness follows from Aumann's measurable selection theorem \citep[Theorem 1, p.54]{Hildenbrand1977}, and the convexity follows from the convex valueness of $B(t, \cdot): T \twoheadrightarrow E$.

We prove the upper hemicontinuity of $\phi.$ The idea of the proof is based on the proof of Theorem 5.1 in \cite{RefWorks:RefID:56-khan1991equilibria}.
Since $(T, \Sigma, \mu)$ is essentially countably generated from A.\ref{assum:essentially countably generated}, it follows that $L^1(\mu,E)$ is separable.
As weakly compact sets in separable Banach spaces are metraizable, it follows that $\mathcal{S}_S^1$ is metraizable.
Thus, to prove that $\phi$ is upper hemicontinuous in the weak topology of $E$, it suffices to prove that if $\{f_n\}$ and $\{g_n\}$ are sequences of $\mathcal{S}_S^1$ such that
    \begin{enumerate}
        \item $f_n$ weakly converges to $f \in \mathcal{S}_S^1$,
        \item $g_n$ weakly converges to $g \in \mathcal{S}_S^1$,
        \item $f_n \in \phi(g_n)$ for all $n,$
    \end{enumerate}
then $f \in \phi(g).$

For all $n \in \mathbf{N},$ it holds that $f_n (x) \in B(t, \mathbb{W} (t,g_n)) \text{ for a.e. } t \in T$.  Thus, there exists $T' \in \Sigma$ with $\mu (T \setminus T') = 0$ such that $f_n (t) \in B(t, \mathbb{W} (t,g_n))$ for all $n \in \mathbf{N}$ and $t \in T'.$
Let $A_n = \mathrm{con} ( \bigcup_{k >n }\{f_k\}),$ where $\mathrm{con}(A)$ means the convex hull of a subset $A$ of a vector space. Since $f$ belongs to the weak closure of $A_n,$ it also belongs to the norm closure of $A_n.$
Take a sequence $\{h_\nu^n\}_\nu$ of $A_n$ such that 
\[\lim_{\nu \to \infty} \|h_\nu^n-f\|_1 = 0.\]
Define a sequence $\{\nu_n\}_n$ of natural numbers as follows:
    $$||h_{\nu_1}^1 -f ||_1 < 1,\ ||h_{\nu_2}^2 -f ||_1 < \frac{1}{2},\  \cdots, ||h_{\nu_n}^n -f ||_1 < \frac{1}{n}, \cdots .$$
Define $h_n = h^n_{\nu_n}.$ Then, $h_n \in A_n$ for all $n \in \mathbf{N}$ and $\|h_n -f \|_1 \to 0$.
By taking a subsequence if necessary, we may assume
\(\lim_{n \to \infty} \|h_n(t)-f(t)\| = 0 \text{ for a.e. } t \in T.\)
In particular, we may assume \(\lim_{n \to \infty} \|h_n(t)-f(t)\| = 0 \text{ for all } t \in T'.\)

Select $t \in T'$ arbitrarily, and we prove that $f(t)$ belongs to $B(t, \mathbb{W}(t, g)).$ Suppose not.
Since $B(t, \mathbb{W}(t, g))$ is closed and convex, there exists a bounded linear functional $\Lambda$ on $E$ and $\alpha \in \mathbf{R} $ such that
    $$ \Lambda(b) < \alpha < \Lambda(f(t)) \text{ for all $b \in B(t, \mathbb{W}(t, g)).$}$$
Since $\{g_n\}_n$ weakly converges to $g$, and $\mathbb{W}(t, \cdot): L^1(\mu,E) \to E$ is weakly continuous, $\{\mathbb{W}(t, g_n)\}_n$ weakly converges to $\mathbb{W}(t, g).$
Since $B(t, \cdot)$ is upper hemicontinuous in the weak topology, there exists $ n_0 \in \mathbf{N}$ such that $B(t,\mathbb{W}(t, g_n)) \subset [\Lambda < \alpha]:= \{x \in E: \Lambda(x) < \alpha\}$ for all $n > n_0.$
If $n > n_0,$ we have
$$h_n(t) \in \mathrm{con} (\bigcup_{k > n} \{f_k(t)\}) \subset \mathrm{con} [\bigcup_{k > n} B(t, \mathbb{W}(t, g_n) )] \subset [\Lambda < \alpha].$$
As $n \to \infty,$ we have $\Lambda (h_n(t)) \to \Lambda(f(t)) \leq \alpha.$ A contradiction.
\end{proof}

\begin{proof}[Proof of Theorem \ref{the:main}]\label{sec: proof of example 2}

$\phi: \mathcal{S}_S^1 \twoheadrightarrow \mathcal{S}_S^1$ is nonempty,  convex valued and upper hemicontinuous in the weak topology of $\mathcal{S}_S^1,$ and $\mathcal{S}_S^1$ is nonempty, convex, and weakly compact. From Kakutani-Fan-Gliksberg fixed point theorem, there exists  $f^* \in \mathcal{S}_S^1$ such that $f^* \in \phi (f^*).$ We have
\begin{enumerate}
\item $f^*(t) \in S(t)$ for a.e $t \in T;$
\item
    $U(t, f^*(t), \int_T W(t,s) f^*(s) d \mu (s)) \\
    = \mathrm{max}_{a \in S(t)} U(t, a ,\int_T W(t,s) f^*(s) d \mu (s)) \text{ for a.e. }  t \in T.$
\end{enumerate}
Therefore, $f^*$ is a Nash equilibrium of $\mathcal{G}.$
\end{proof}

\begin{remark}
    The upper hemicontinuity of $\phi$ in Lemma \ref{lem:upper-himicontinuity} can be proved by means of Corollary 4.1 of \cite{Yannelis1991}.
    We denote by $w$-$\mathrm{lim}_{n \to \infty} x_n$ the weak limit of a sequence $\{x_n\}_{n \in \mathbf{N}}$ in $E.$ The \textit{weak upper limit} of a sequence of subsets $\{A_n\}_{n \in N}$ in $E$ is defined by
\begin{equation*}
    w \text{-}\mathrm{Ls}A_n = \{ x \in E: \exists \{x_{n_i}\}_{i \in \mathbf{N}}, x = w\text{-}\mathrm{lim}_{i \to \infty} x_{n_i} 
    \text{ and }
    x_{n_i} \in A_{n_i} \text{ for all } i \in \mathbf{N}\}.
\end{equation*}

Let $\{f_n\}$ and $\{g_n\}$ be as described in Lemma \ref{lem:upper-himicontinuity}. Then, it follows from Corollary 4.1 of \cite{Yannelis1991} that
$f(t) \in \overline{\mathrm{con}}w\text{-}\mathrm{Ls} \{f_n(t)\}$ for a.e. $t \in T,$ where $\overline{\mathrm{con}}(A)$ denotes the closed convex hull of a subset $A$ of a topological vector space.
Since $f_n(x) \in B(t, \mathbb{W}(t,g_n))$ for a.e. $t \in T$ for all $n \in \mathbf{N},$ we have $\overline{\mathrm{con}}w\text{-}\mathrm{Ls} \{f_n(t)\} \subset
\overline{\mathrm{con}}w\text{-}\mathrm{Ls} B(t, \mathbb{W}(t,g_n))$ for a.e. $t \in T$. Finally, since $B(t, \cdot): E \twoheadrightarrow S(t)$ is weakly upper hemicontinuous (and thus has a closed graph), $\mathbb{W}(t, g_n)$ converges to $\mathbb{W}(t, g)$ in the weak topology of $E$, and $B(t, \mathbb{W}(t, g))$ is closed convex, we have
$$f(t) \in \overline{\mathrm{con}}w\text{-}\mathrm{Ls} B(t, \mathbb{W}(t,g_n)) \subset B(t, \mathbb{W}(t, g)) \text{ for a.e. } t\in T.$$
Therefore, it follows that $f \in \phi (g).$
\end{remark}

\subsection{Proof of Proposition \ref{proposition}}
\begin{proof}
Since it is clear in the case of $\lambda = 0,$ we prove the case where $\lambda > 0.$ We may assume without loss of generality that $W(t,s) \leq ||W||_{\infty}$ for all $(t,s) \in T \times T.$

Consider the following integral equation:
        $$\phi (t) = \lambda \int W(t,s) \phi (s) ds + g(t),$$
    where the unknown function \( \phi \) is in \( L^\infty (\mu)\).\footnote{This type of integral equation is called a Fredholm integral equation of the second kind. For more details on this integral equation, see Section 4.11 of \cite{RefWorks:RefID:15-taylor1986introduction}.}
    Define the operator $\mathbb{W}: L^\infty (\mu) \to L^\infty(\mu)$ as $\mathbb{W}f(t) = \int W(t,s) f(s) ds.$ 
    Since $g \in L^\infty (\mu),$ the integral equation can be represented as the following equation in $L^\infty(\mu):$
    $$(I-\lambda \mathbb{W}) \phi = g,$$
    where \(I\) is the identity operator on \(L^\infty (\mu).\)
    
    For all $f \in L^\infty(\mu)$ and $t \in T,$ we have
    $$|\mathbb{W}f(t)| = |\int_T W(t,s) f(s) ds| \leq \int_T |W(t,s) f(s)| ds \leq ||W||_\infty ||f||_\infty.$$
    Thus, $\mathbb{W}$ is a bounded linear operator, and $||\mathbb{W}|| \leq ||W||_\infty,$ where $||\mathbb{W}||$ denotes the operator norm of $\mathbb{W}.$
    Since $||\mathbb{W}|| \leq ||W||_\infty,$ it follows that $\lambda \|\mathbb{W}\| < 1.$ Therefore, according to standard arguments of operator theory, $(I - \lambda \mathbb{W})$ has the inverse operator given by $\sum_{n=0}^\infty \lambda^n \mathbb{W}^n$, namely, the Neumann series of $\mathbb{W}$ \citep[see, for instance,][Theorem 4.1-C, p.164]{RefWorks:RefID:15-taylor1986introduction}.
    Let $\Gamma(\lambda) = \sum_{n =1}^\infty \lambda^{n-1} \mathbb{W}^n$, which yields
        $$(I - \lambda \mathbb{W})^{-1} = \sum_{n=0}^\infty \lambda^n \mathbb{W}^n = I+ \lambda \Gamma(\lambda).$$

\begin{claim}
    $\Gamma(\lambda)$ is represented as:
                    $$(\Gamma(\lambda)f)(t) = \int_T \Gamma(t,s,\lambda) f(s) ds.$$
\end{claim}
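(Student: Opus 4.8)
The plan is to establish the kernel representation in two stages: first show that each power $\mathbb{W}^n$ is itself an integral operator whose kernel is exactly $W_n$, and then sum the Neumann series kernel-by-kernel. I would begin by proving, by induction on $n$, that $(\mathbb{W}^n f)(t) = \int_T W_n(t,s) f(s)\, ds$ for every $f \in L^\infty(\mu)$ and every $t \in T$. The base case $n = 1$ is just the definition of $\mathbb{W}$ together with $W_1 = W$. For the inductive step I would write $\mathbb{W}^n f = \mathbb{W}(\mathbb{W}^{n-1} f)$, substitute the inductive hypothesis, and interchange the two integrals by Fubini's theorem. This interchange is licensed because $W$ and $W_{n-1}$ take values in $[0,1]$ and $(T,\Sigma,\mu)$ is a finite measure space, so that $(x,s) \mapsto W(t,x) W_{n-1}(x,s) f(s)$ is integrable on $T \times T$; the resulting inner integral $\int_T W(t,x) W_{n-1}(x,s)\, dx$ is precisely $W_n(t,s)$ by definition.

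Next I would verify that the kernel $\Gamma(t,s,\lambda) = \sum_{n=1}^\infty \lambda^{n-1} W_n(t,s)$ is well defined and bounded. A short induction (using $\mu(T) = 1$) gives the pointwise bound $|W_n(t,s)| \leq \|W\|_\infty^n$, so the defining series is dominated by the geometric series
\[
\|W\|_\infty \sum_{n=1}^\infty (\lambda \|W\|_\infty)^{n-1} = \frac{\|W\|_\infty}{1 - \lambda \|W\|_\infty},
\]
which converges because $\lambda \|W\|_\infty < 1$. Hence $\Gamma(\cdot,\cdot,\lambda)$ converges absolutely and uniformly on $T \times T$ and is bounded there.

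Finally I would combine the two steps. Fixing $f \in L^\infty(\mu)$ and $t \in T$, the dominated convergence theorem — with dominating function $s \mapsto \tfrac{\|W\|_\infty}{1 - \lambda \|W\|_\infty}\,|f(s)|$, which is integrable on the finite measure space — permits exchanging summation and integration, yielding
\[
\int_T \Gamma(t,s,\lambda) f(s)\, ds = \sum_{n=1}^\infty \lambda^{n-1} \int_T W_n(t,s) f(s)\, ds = \sum_{n=1}^\infty \lambda^{n-1} (\mathbb{W}^n f)(t),
\]
where the last equality uses the kernel identity from the first step. Since $\lambda \|\mathbb{W}\| < 1$, the Neumann series $\Gamma(\lambda) = \sum_{n=1}^\infty \lambda^{n-1} \mathbb{W}^n$ converges in operator norm, so its partial sums converge to $\Gamma(\lambda) f$ in $L^\infty(\mu)$, hence uniformly off a null set; this identifies the right-hand side above with $(\Gamma(\lambda) f)(t)$ for a.e.\ $t$, giving the claimed representation. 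I expect the only real subtlety to be bookkeeping the two distinct modes of convergence — the Fubini/pointwise argument underlying the kernel identity versus the operator-norm convergence of the Neumann series — and ensuring they are reconciled on a common full-measure set, rather than any genuine analytic difficulty, since all the functions involved are uniformly bounded on a probability space.
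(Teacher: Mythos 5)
Your proposal is correct, and its core ingredients coincide with the paper's: the kernel identity $(\mathbb{W}^n f)(t) = \int_T W_n(t,s)f(s)\,ds$ established via Fubini, the geometric pointwise bound $W_n(t,s) \le \|W\|_\infty^n$ giving absolute convergence of the kernel series (your bound $\frac{\|W\|_\infty}{1-\lambda\|W\|_\infty}$ is the same number the paper writes as $\frac{1}{\lambda}\bigl(\frac{1}{1-\lambda\|W\|_\infty}-1\bigr)$), and the operator-norm convergence of the Neumann series. Where you genuinely diverge is the final identification step. The paper first reduces to indicator functions: since simple functions are dense in $L^\infty(\mu)$ and both $\Gamma(\lambda)$ and the kernel operator $\Gamma'$ are bounded linear operators, it suffices to check $f = \chi_A$; for such $f$ the paper proves a \emph{uniform} (sup-norm) tail estimate showing the Neumann partial sums applied to $\chi_A$ converge to $\Gamma'\chi_A$ in $\|\cdot\|_\infty$, and then matches this with the operator-norm limit. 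You instead treat an arbitrary $f \in L^\infty(\mu)$ directly, exchange sum and integral by dominated convergence (with dominating function a constant multiple of $|f|$ on a finite measure space), and then reconcile the resulting everywhere-pointwise limit with the $L^\infty$-limit of the partial sums, which agree off a null set. Your route buys directness: it dispenses with the density reduction and with verifying boundedness of $\Gamma'$ (needed for the paper's continuity-plus-density argument), at the cost of the pointwise-versus-norm bookkeeping you correctly flag; the paper's route keeps everything at the level of uniform estimates, with the trade-off that the uniform bound is only transparent for indicators, forcing the density step. You also spell out the induction behind the kernel identity for $\mathbb{W}^n$, which the paper dismisses as clear from Fubini; that is a useful filled-in detail rather than a difference of substance.
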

\begin{proof}
    First, we prove the convergence of the series $\sum_{n=1}^\infty \lambda^{n-1} W_n(t,s)$. Since $0 \leq W(t, s) \leq ||W||_{\infty},$ it follows by induction that $0 \leq W_n(t,s) \leq ||W||_\infty^n$ for all $(t,s) \in T\times T$ and $n \in \mathbf{N}.$ Hence, we have
    \[
        \sum_{n=1}^\infty \lambda^{n-1} W_n(t,s)
        = \frac{1}{\lambda} \sum_{n=1}^\infty \lambda^{n} W_n(t,s) \\
        \leq \frac{1}{\lambda} \sum_{n=1}^\infty \lambda^{n} ||W||_\infty^n \\
        = \frac{1}{\lambda} \left( \frac{1}{1-\lambda||W||_\infty} -1 \right).
    \]
    
    Therefore, the nonnegative term series $\sum_{n=1}^\infty \lambda^{n-1} W_n(t,s)$ converges to a finite sum. Consequently, for all $(t,s) \in T\times T,$ we have $0 \leq \Gamma (t,s,\lambda) = \sum_{n=1}^\infty \lambda^{n-1} W_n(t,s) \leq \frac{1}{\lambda} (\frac{1}{1-\lambda||W||_\infty} -1).$ 

    Define $\Gamma':L^\infty (\mu) \to L^\infty(\mu)$ as $\Gamma'f(t) = \int \Gamma(t,s,\lambda) f(s) ds.$ We prove that $\Gamma(\lambda) f = \Gamma'f$ for all $f \in L^\infty(\mu).$ Since the set of all simple functions is dense in $L^\infty(\mu),$ and since $\Gamma(\lambda)$ and $\Gamma'$ are continuous and linear, it suffices to consider $f = \chi_A,$ where $A \in \Sigma.$ 

    It is clear from the definition of $W_n$ and Fubini's theorem that $\mathbb{W}^n$ is represented as $\mathbb{W}^n f (t) = \int_T W_n(t,s) f(s) ds.$ Hence, $(\sum_{n=1}^k \lambda^{n-1} \mathbb{W}^n)$ can be written as
    $$ \left( \sum_{n=1}^k \lambda^{n-1} \mathbb{W}^n \right) f (t) = \int_T \left( \sum_{n=1}^k \lambda^{n-1} W_n(t,s) \right) f(s) ds.$$
    Note that
    \begin{equation*}
        \begin{aligned} 
         \left| \Gamma' \chi_A(t) -  \left( \sum_{n=1}^k \lambda^{n-1} \mathbb{W}^n(t,s) \right) \chi_A(t) \right| 
        &= \left| \int_A \left( \Gamma(t,s,\lambda) -  \sum_{n=1}^k \lambda^{n-1} W_n(t,s) \right) ds \right| \\
        &\leq \int_A \left| \Gamma(t,s,\lambda) -  \sum_{n=1}^k \lambda^{n-1} W_n(t,s) \right| ds, 
        \end{aligned}
    \end{equation*}
    and
    \begin{equation*}
        \begin{aligned}
            \left| \Gamma(t,s,\lambda) -  \sum_{n=1}^k \lambda^{n-1} W_n(t,s) \right|
            &= \Gamma(t,s,\lambda) -  \sum_{n=1}^k \lambda^{n-1} W_n(t,s)\\
            &= \sum_{n=k+1}^\infty \lambda^{n-1} W_n(t,s) \\
            &\leq \sum_{n=k+1}^\infty \lambda^{n-1}||W||_\infty^n \\
            &\leq \frac{1}{\lambda}\left( \frac{1}{1-\lambda||W||_\infty} - \frac{1- \lambda^k||W||_\infty^k}{1-\lambda||W||_\infty}\right).    
        \end{aligned} 
    \end{equation*}
Thus, we have
$$\left| \left|\Gamma' \chi_A - \left( \sum_{n=1}^k \lambda^{n-1} \mathbb{W}^n \right) \chi_A \right| \right|_\infty \leq \frac{1}{\lambda}\left( \frac{1}{1-\lambda||W||_\infty} - \frac{1- \lambda^k||W||_\infty^k}{1-\lambda||W||_\infty}\right).$$
As $k$ tends to infinity, the sequence $\{ (\sum_{n=1}^k \lambda^{n-1} \mathbb{W}^n) \chi_A \}_k$ converges to $\Gamma' \chi_A$ in the norm topology of $L^\infty (\mu).$ Since $\{ \sum_{n=1}^k \lambda^{n-1}  \mathbb{W}^n \} _k$ converges to $\Gamma(\lambda)$ in the operator norm, the sequence $\{ (\sum_{n=1}^k \lambda^{n-1} \mathbb{W}^n) \chi_A \}_k$ also converge to $\Gamma(\lambda) \chi_A$ in the norm topology of $L^\infty (\mu).$ Therefore, it follows that $\Gamma(\lambda) \chi_A = \Gamma' \chi_A.$
\end{proof}

Defining $s_g = (I - \lambda \mathbb{W})^{-1} g = (I + \lambda \Gamma(\lambda)) g,$ we obtain $$s_g(t) = g(t) + \lambda \int_T \Gamma (t,s,\lambda) g(s) ds.$$
Since $0 \leq \Gamma(t,s, \lambda) \leq \frac{1}{\lambda} (\frac{1}{1-\lambda||W||_\infty} -1)$ and \(g(t) \in [0,1]\) for a.e. \(t \in T\), we have 
$$0 \leq s_g(t) \leq 1 + \left( \frac{1}{1-\lambda||W||_\infty} -1 \right) =  \frac{1}{1-\lambda||W||_\infty} \text{ for a.e. } t \in T.$$
Hence,
$$\lambda \int W(t,s) s_g(s) + 1 \leq \frac{\lambda}{1-\lambda||W||_\infty} + 1 \ \text{for a.e.} \ t \in T.$$
Since $L$ is assumed to be sufficiently large, we may assume
$$\frac{1}{1-\lambda||W||_\infty}\leq L$$
and
$$\frac{\lambda}{1-\lambda||W||_\infty} + 1 \leq L.$$
In summary, the following three relations hold:
$$s_g(t) \in [0,L] \ \text{for a.e.} \ t \in T.$$
$$0 \leq \lambda \int W(t,s) s_g(s) ds \leq \lambda \int W(t,s) s_g(s) + 1 \leq L \ \text{for a.e.} \ t \in T.$$
$$s_g(t) = \lambda \int W(t,s) s_g(s) + g(t) \in \left[ \int W(t,s) s_g(s) ds, \lambda \int W(t,s) s_g(s) + 1 \right] \text{ for a.e } t \in T.$$
Thus, $s_g$ is a Nash equilibrium of $\mathcal{G}.$

The latter part is clear because $s_g$ is defined as $s_g = (I - \lambda \mathbb{W} )^{-1} g.$
\end{proof}

\subsection{Proof of Lemma \ref{lem:approximation strategy}}\label{sec:Proof of Lemma: approximation strategy}
We begin with the following definition (Definition 7.9 in \cite{RefWorks:RefID:13-rudin1987real}).\footnote{In the definition, \( B(t, r_n) \) represents the open ball centered at \( t \) with radius \( r_n \).}
\begin{definition}
    Suppose $t \in \mathbf{R}$. A sequence $\{E_n\}$ of Borel sets in $\mathbf{R}$ is said to \textit{shrink to} $t$ \textit{nicely} if there is a number $\alpha > 0$ with the following property: There is a sequence $\{B(t,r_n)\}$ of balls with $\lim r_n = 0$ such that $E_n \subset B(t,r_n)$ and
    \begin{equation*}
        \mu (E_n) \geq \alpha \mu (B(t,r_n))
    \end{equation*}
    for all $n \in \mathbf{N}.$
\end{definition}

The following proposition is an extension of a corollary of the Lebesgue differentiation theorem (Theorem 7.10 of \cite{RefWorks:RefID:13-rudin1987real}) to Bochner integrable functions on \((0,1]\).
\begin{proposition}\label{pro:differentiation}
Let $E$ be a Banach space and $f:(0,1] \to E$ be a Bochner integrable function. Associate to each $t \in (0,1]$ a sequence $\{E_n(t)\}$ of subsets of $(0,1]$ that shrinks to $t$ nicely. Then
\[
    f(t) = \lim_{n \to \infty} \frac{1}{\mu (E_n(t))} \int_{E_n(t)} f d \mu
\]
for a.e. $t \in (0,1].$
\end{proposition}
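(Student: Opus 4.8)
The plan is to reduce the vector-valued statement to the classical scalar Lebesgue differentiation theorem by exploiting separability, and then to pass from balls to the nicely shrinking sets $E_n(t)$ by a routine norm estimate. Since $f$ is Bochner integrable it is strongly measurable, hence essentially separably valued; discarding a null set, I may assume the range of $f$ lies in a separable closed subspace, so without loss of generality $E$ is separable. Fix a countable dense subset $\{y_k\}_{k \in \mathbf{N}}$ of $E$. Because $\mu$ is finite and $\|f\| \in L^1(\mu)$, each scalar function $s \mapsto \|f(s) - y_k\|$ lies in $L^1(\mu)$.

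For each $k$, the scalar Lebesgue differentiation theorem gives a null set $N_k$ outside of which every $t$ is a Lebesgue point of $\|f(\cdot) - y_k\|$, i.e.
\[
\lim_{r \to 0} \frac{1}{\mu(B(t,r))} \int_{B(t,r)} \|f(s) - y_k\| \, d\mu(s) = \|f(t) - y_k\|.
\]
(For $t$ in the interior the balls eventually lie inside $(0,1]$, so intersecting with $(0,1]$ is harmless, and the endpoints form a null set.) Put $N = \bigcup_k N_k$, still null. The key step is to show that every $t \notin N$ satisfies
\begin{equation*}
\lim_{r \to 0} \frac{1}{\mu(B(t,r))} \int_{B(t,r)} \|f(s) - f(t)\| \, d\mu(s) = 0. \tag{$\ast$}
\end{equation*}
Given $\epsilon > 0$, choose $k$ with $\|f(t) - y_k\| < \epsilon$. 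By the triangle inequality $\|f(s) - f(t)\| \le \|f(s) - y_k\| + \epsilon$, so the averaged integral in $(\ast)$ is dominated by the average of $\|f(s) - y_k\|$ plus $\epsilon$, whose $\limsup$ as $r \to 0$ is at most $\|f(t) - y_k\| + \epsilon < 2\epsilon$. Letting $\epsilon \to 0$ yields $(\ast)$.

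Finally, fix $t \notin N$ and let $\{E_n(t)\}$ shrink to $t$ nicely, with constant $\alpha > 0$ and balls $B(t,r_n) \supseteq E_n(t)$ satisfying $r_n \to 0$ and $\mu(E_n(t)) \ge \alpha\,\mu(B(t,r_n))$. Since $f(t)$ is a fixed vector, $\int_{E_n(t)} f(t)\, d\mu = \mu(E_n(t)) f(t)$, and the norm inequality for the Bochner integral gives
\[
\left\| \frac{1}{\mu(E_n(t))} \int_{E_n(t)} f \, d\mu - f(t) \right\| \le \frac{1}{\mu(E_n(t))} \int_{E_n(t)} \|f(s) - f(t)\| \, d\mu(s) \le \frac{1}{\alpha\,\mu(B(t,r_n))} \int_{B(t,r_n)} \|f(s) - f(t)\| \, d\mu(s),
\]
where the last step uses $E_n(t) \subseteq B(t,r_n)$ together with the nicely-shrinking bound. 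By $(\ast)$ and $r_n \to 0$ the right-hand side tends to $0$, giving the claim for every $t \notin N$.

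I expect the main obstacle to be establishing $(\ast)$: the classical theorem only differentiates scalar integrals, so the real work is manufacturing a single full-measure set of \emph{vector} Lebesgue points on which $\|f(s) - f(t)\|$ averages to zero. The device that overcomes this is the countable dense set $\{y_k\}$, which replaces the uncountable family of values $f(t)$ by a countable one and lets me intersect only countably many null sets; everything after $(\ast)$ is the norm estimate for the Bochner integral combined with the nicely-shrinking hypothesis.
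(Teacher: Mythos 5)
Your proof is correct and follows essentially the same route as the paper: reduce to the separably valued case, fix a countable dense set $\{y_k\}$, apply the scalar Lebesgue differentiation theorem to the integrable functions $s \mapsto \|f(s)-y_k\|$, and conclude via the triangle inequality on a common full-measure set. The only cosmetic difference is that you invoke the ball (Lebesgue point) version and then transfer to the nicely shrinking sets $E_n(t)$ by the comparability constant $\alpha$, whereas the paper cites Rudin's Theorem 7.10, which builds that transfer in.
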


\begin{proof}
    Since 
    \[
        ||f(t) - \frac{1}{\mu (E_n(t))} \int_{E_n(t)} f d \mu|| \leq \frac{1}{\mu (E_n(t))} \int_{E_n(t)} ||f - f(t)|| d \mu
    \]
    for any $t \in (0,1],$
    it suffices to prove that 
    \[
        \lim_{n \to \infty} \frac{1}{\mu (E_n(t))} \int_{E_n(t)} ||f - f(t)|| d \mu = 0
    \]
    for a.e. $t \in (0,1].$

    Since $f$ is Bochner integrable, we may assume $f$ is separably valued. Let $\{a_i\}$ be a dense subset of $f((0,1]).$

    For each $i\in \mathbf{N}$, define $g_i:\mathbf{R} \to \mathbf{R}$ as $g_i(t) = ||f(t) - a_i||$ if $t \in (0,1]$ and $g_i (t)= 0$ otherwise.
    Since $f$ is Bochner integrable, $g_i$ is integrable.
    For each $t \in \mathbf{R}\backslash (0,1],$ define $E_n (t) = B(t,\frac{1}{n}).$ It is clear that $\{E_n(t)\}$ shrinks to $t$ nicely.
    Then, it follows from Theorem 7.10 of \cite{RefWorks:RefID:13-rudin1987real} that
    \begin{equation*}
        \lim_{n \to \infty} \frac{1}{\mu (E_n(t))} \int_{E_n(t)} g_i d \mu = g_i(t)
    \end{equation*}
    for a.e. $t \in \mathbf{R}.$
    Hence we have
    \begin{equation}\label{equ1}
        \lim_{n \to \infty} \frac{1}{\mu (E_n(t))} \int_{E_n(t)} ||f-a_i|| d \mu = ||f(t)-a_i||
    \end{equation}
    for a.e. $t \in (0,1].$
    
    For any $t \in (0,1]$ that satisfies (\ref{equ1}) for all $i \in \mathbf{N},$ it holds that
    \begin{align*}
        \limsup_{n \to \infty} \frac{1}{\mu (E_n(t))} \int_{E_n(t)} ||f - f(t)|| d \mu &\leq 
        \limsup_{n \to \infty} \frac{1}{\mu (E_n(t))} \int_{E_n(t)} ||f - a_i|| d \mu + ||a_i -f(t)||\\
        &= 2||f(t) -a_i||
    \end{align*}
    for all $i \in \mathbf{N}.$
    For any $\epsilon>0,$ take $i \in \mathbf{N}$ such that $||f(t)-a_i|| < \frac{\epsilon}{2}.$ Then we have
    \begin{equation*}
        \limsup_{n \to \infty} \frac{1}{\mu (E_n(t))} \int_{E_n(t)} ||f - f(t)|| d \mu < \epsilon,
    \end{equation*}
    which completes the proof.
\end{proof}

\begin{proof}[Proof of Lemma \ref{lem:approximation strategy}]
    Let $E_{ni} = (\frac{i-1}{k_n}, \frac{i}{k_n}]$ for each $n \in \mathbf{N}$ and each $i =1,2,\cdots, k_n$. Define a $k_n$-step function $f_n:(0,1] \to E$ as
    \begin{equation*}
        f_n (t) = \sum_{i=1}^{k_n} \left(\frac{1}{\mu(E_{ni})} \int_{E_{ni}} f d\mu \right) \chi_{E_{ni}}.
    \end{equation*}
    Since $C$ is closed convex subset of $E$, and $f((0,1]) \subset C,$ it follows from Corollary 8 of [Vector measures] that 
    \[
        \frac{1}{\mu(E_{ni})} \int_{E_{ni}} f d\mu \in C.
    \]
     Hence we have $f_n((0,1]) \subset C$.

    For each $t\in (0,1],$ let $E_n(t)$ be the unique $E_{ni}$ that contains $t.$ 
    Note that $E_n(t) \subset B(t, \frac{1}{k_n}),$ and $\mu(E_n (t)) = \frac{1}{2}\mu (B(t,\frac{1}{k_n}))$ for all $n \in \mathbf{N}$. Hence, since \(k_n \to \infty\), the sequence $\{E_n(t)\}$ shrinks to $t$ nicely.

    Since
    \[
        f_n(t) = \frac{1}{\mu(E_n(t))} \int_{E_n(t)} f d\mu
    \] for all $t \in (0,1]$ and $n \in \mathbf{N}$, it follows from Proposition \ref{pro:differentiation} that $f_n \overset{a.e.}{\longrightarrow} f$.
\end{proof}

\subsection{Proof of Theorem \ref{theorem:limit is Nash}}\label{sec: Proof of limit is Nash}

\begin{proof}
Without loss of generality, we may assume that \( U_n(t) \to U(t) \) and \( f_n(t) \to f(t) \) for all \( t \in (0,1] \). Furthermore, we may assume that \( W_n(t,s) \to W(t,s) \) for all \( (t,s) \in (0,1]^2 \). We define \( E_n \) and \( E \) as follows:
\begin{align*}
E   &= \left\{ t \in (0,1] : U(t)\left(f(t), \int W(t,s)f(s)\, d\mu(s)\right) 
            < \max_{a \in S} U(t)\left(a, \int W(t,s)f(s)\, d\mu(s)\right) \right\}, \\
E_n &= \left\{ t \in (0,1] : U_n(t)\left(f_n(t), \int W_n(t,s)f_n(s)\, d\mu(s)\right) 
            < \max_{a \in S} U_n(t)\left(a, \int W_n(t,s)f_n(s)\, d\mu(s)\right) - \epsilon_n \right\}.
\end{align*}

By assumption, we have \( \mu(E_n) \leq \epsilon_n \). Since \( \epsilon_n \to 0 \), it follows that \( \mu\left(\bigcap_{n>N} E_n\right) = 0 \) for all \( N \in \mathbf{N} \). Therefore, we obtain

$$
\mu\left(\bigcup_{N=1}^\infty \bigcap_{n>N} E_n\right) = 0.
$$

To prove the result, it suffices to show that \( E \subset \bigcup_{N=1}^\infty \bigcap_{n>N} E_n \).
Now, fix \( t \in E \) arbitrarily. To simplify the notation, we introduce the following symbols:
\[
g(s) = W(t,s)f(s), \quad g_n(s) = W_n(t,s)f_n(s);
\]
\[
\alpha = U(t)\left(f(t), \int g(s) \, d\mu(s)\right), \quad \alpha_n = U_n(t)\left(f_n(t), \int g_n(s) \, d\mu(s)\right);
\]
\[
\beta = \max_{a \in S} U(t)\left(a, \int g(s) \, d\mu(s)\right), \quad \beta_n = \max_{a \in S} U_n(t)\left(a, \int g_n(s) \, d\mu(s)\right);
\]
\[
\epsilon = \beta - \alpha > 0.
\]

Since \( S \) is weakly compact, it is norm bounded. Therefore, there exists some \( r > 0 \) such that \( \|f(s)\|, \|f_n(s)\| \leq r \) for all \( s \in (0,1] \) and \( n \in \mathbf{N} \). Moreover, since \( |W_n(t,s)|, |W(t,s)| \leq 1 \), we also have \( \|g(s)\|, \|g_n(s)\| \leq r \) for all \( s \in (0,1] \) and \( n \in \mathbf{N} \).
Since \( \|g_n(s) - g(s)\| \to 0 \) for all \( s \in (0,1] \), it follows from the dominated convergence theorem \citep[][Theorem 3, p.45]{RefWorks:RefID:16-diestel1977vector} that
$$
\int g_n(s)d \mu(s) \to \int g(s)d\mu(s)
$$
in the norm topology, and hence in the weak topology as well.

\begin{claim}\label{claim:continuity of V}
Define \( V: C \times S \times S \to \mathbf{R} \) as \( V(u,a,e) = u(a,e) \). Then, \( V \) is continuous with respect to the weak topology of \( S \). Moreover, \( \max_{a \in S} V(\cdot,a,\cdot) : C \times S \to \mathbf{R} \) is also continuous with respect to the weak topology of \( S \).
\end{claim}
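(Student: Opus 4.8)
The plan is to exploit the fact, already recorded above, that $S$ is weakly compact in a separable Banach space, so the weak topology on $S$ is metrizable. Consequently $S^2$ with the product weak topology is a compact metric space, $C = C(S^2)$, and the whole domain $C \times S \times S$ (norm topology on the $C$-factor, weak topology on each $S$-factor) is metrizable. It therefore suffices to verify both continuity statements sequentially, and I would fix convergent sequences and estimate.

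For the first assertion, let $(u_n, a_n, e_n) \to (u, a, e)$, meaning $\|u_n - u\| \to 0$ in sup norm while $a_n \rightharpoonup a$ and $e_n \rightharpoonup e$ weakly. I would split
\[
|u_n(a_n,e_n) - u(a,e)| \le |u_n(a_n,e_n) - u(a_n,e_n)| + |u(a_n,e_n) - u(a,e)|.
\]
The first term is bounded by $\|u_n - u\|$ and vanishes, since the sup norm dominates pointwise evaluation; the second vanishes because $u$ is by definition continuous on $S^2$ in the weak topology and $(a_n, e_n) \to (a,e)$ in the product weak topology. Hence $V$ is continuous.

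For the second assertion, write $M(u,e) = \max_{a \in S} u(a,e)$, the maximum being attained because $u(\cdot, e)$ is weakly continuous and $S$ is weakly compact. Fixing $(u_n, e_n) \to (u,e)$, I would decompose
\[
|M(u_n, e_n) - M(u, e)| \le |M(u_n, e_n) - M(u, e_n)| + |M(u, e_n) - M(u, e)|.
\]
The first term is handled by the $1$-Lipschitz property of the supremum, namely $|M(u_n, e_n) - M(u, e_n)| \le \sup_{a \in S} |u_n(a, e_n) - u(a, e_n)| \le \|u_n - u\| \to 0$. For the remaining term I would show directly that $e \mapsto M(u,e)$ is weakly continuous on $S$: for lower semicontinuity, take a maximizer $a^*$ for $e$ and note $M(u, e_n) \ge u(a^*, e_n) \to u(a^*, e) = M(u, e)$; for upper semicontinuity, take maximizers $a_n$ for $e_n$, extract a weakly convergent subsequence $a_{n_k} \rightharpoonup a^*$ using the weak compactness of $S$, and apply the joint weak continuity of $u$ to get $\limsup_n M(u, e_n) = \lim_k u(a_{n_k}, e_{n_k}) = u(a^*, e) \le M(u, e)$. (Equivalently, this is Berge's maximum theorem applied to the continuous objective $u$ over the constant compact-valued constraint $S$.)

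The main obstacle is the interplay of the two topologies in the second assertion: the domain carries the sup norm on the $C$-factor but the weak topology on the $S$-factors, and the maximum couples them. The Lipschitz estimate cleanly absorbs the $C$-factor and isolates the genuine difficulty, the weak continuity of the value function in the externality $e$, which rests on the weak compactness of $S$ together with the weak continuity of each $u$.
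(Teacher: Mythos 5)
Your proposal is correct, and its first half coincides with the paper's argument: the paper also splits \( |u_\alpha(a_\alpha,e_\alpha) - u(a,e)| \) into a sup-norm term and a term controlled by the weak continuity of \( u \) on \( S \times S \) (the paper works with nets, while you justify sequences via the metrizability of the weak topology on \( S \), which the paper itself records in the setup, so this difference is cosmetic). Where you genuinely diverge is the second assertion: the paper simply invokes Berge's maximum theorem for the jointly continuous objective \( V \) over the constant compact-valued constraint \( S \), whereas you reprove that theorem in this special case — the \( 1 \)-Lipschitz estimate \( |M(u_n,e_n) - M(u,e_n)| \le \|u_n - u\| \) absorbs the \( C \)-factor, and weak compactness of \( S \) plus weak continuity of \( u \) give lower and upper semicontinuity of \( e \mapsto M(u,e) \). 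Your route is self-contained and makes transparent exactly which hypotheses (compactness for usc, nothing beyond continuity for lsc) drive the result, at the cost of length; the paper's citation is shorter and delegates the bookkeeping. One small point of rigor in your usc step: before extracting the weakly convergent subsequence of maximizers \( a_{n_k} \), you should first pass to a subsequence of \( \{e_n\} \) along which \( M(u,e_n) \) converges to \( \limsup_n M(u,e_n) \); as written, the identity \( \limsup_n M(u,e_n) = \lim_k u(a_{n_k}, e_{n_k}) \) is not justified for an arbitrary subsequence. This is a standard and easily repaired omission, not a gap in the idea.
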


\begin{proof}
Let \( \{(u_\alpha, a_\alpha, e_\alpha)\} \) be a net in \( C \times S \times S \) that converges to \( (u, a, e) \in C \times S \times S \). Note that
\begin{align*}
|V(u,a,e) - V(u_\alpha, a_\alpha, e_\alpha)| 
    &= |u(a,e) - u_\alpha(a_\alpha, e_\alpha)| \\
    &\leq |u(a,e) - u(a_\alpha, e_\alpha)| + |u(a_\alpha, e_\alpha) - u_\alpha(a_\alpha, e_\alpha)| \\
    &\leq |u(a,e) - u(a_\alpha, e_\alpha)| + \|u - u_\alpha\|.
\end{align*}
Since \( u \) is continuous on \( S \times S \), and \( (u_\alpha, a_\alpha, e_\alpha) \to (u, a, e) \), we have \( V(u_\alpha, a_\alpha, e_\alpha) \to V(u,a,e) \).

Since \( S \) is weakly compact, \( \max_{a \in S} V(u,e) \) exists for all \( (u,e) \in C \times S \). Since \( V \) is continuous on \( C \times S \times S \), it follows from Berge’s maximum theorem \citep[][Theorem 17.31]{RefWorks:RefID:7-aliprantis2006infinite} that \( \max_{a \in S} V(\cdot,a,\cdot) : C \times S \to \mathbf{R} \) is continuous.
\end{proof}

Since \( U_n(t) \to U(t) \) and \( f_n(t) \to f(t) \) in the norm topology (and hence in the weak topology), and \( \int g_n(s) d\mu(s) \to \int g(s) d\mu(s) \) in the weak topology, it follows from this claim that \( \alpha_n \to \alpha \) and \( \beta_n \to \beta \).
Since \( \epsilon_n \to 0 \), there exists some \( N \in \mathbf{N} \) such that \( |\alpha - \alpha_n| < \frac{1}{4} \epsilon \), \( |\beta - \beta_n| < \frac{1}{4} \epsilon \), and \( \epsilon_n < \frac{1}{2} \epsilon \) for all \( n > N \). Therefore, we have \( \beta_n - \alpha_n > \epsilon_n \) for all \( n > N \).
This means that
$$
t \in \bigcap_{n > N} E_n \subset \bigcup_{N=1}^\infty \bigcap_{n>N} E_n.
$$
\end{proof}

\subsection{Proof of Theorem \ref{theorem:Nash is limit}}
\label{sec:Proof of Nash is limit}
\begin{proof}
Without loss of generality, we may assume that \( U_n(t) \to U(t) \) and \( f_n(t) \to f(t) \) for all \( t \in (0,1] \). Furthermore, we may assume that \( W_n(t,s) \to W(t,s) \) for all \( (t,s) \in (0,1]^2 \).

Define
\begin{align*}
h_n(t) &= \max_{a \in S} U_n(t)\left(a, \int W_n(t,s) f_n(s) d\mu(s)\right) - U_n(t)\left(f_n(t), \int W_n(t,s) f_n(s) d\mu(s)\right);\\
h(t) &= \max_{a \in S} U(t)\left(a, \int W(t,s) f(s) d\mu(s)\right) - U(t)\left(f(t), \int W(t,s) f(s) d\mu(s)\right);
\end{align*}
\[
\epsilon_n = \inf \left\{ \epsilon > 0 : \mu (\left\{ t \in (0,1] : h_n(t) \leq \epsilon \right\}) \geq 1 - \epsilon \right\}.
\]
It suffices to prove that \( \epsilon_n \to 0 \).

Define \( V: C \times S \times S \to \mathbf{R} \) as \( V(u,a,e) = u(a,e) \). Then, \( h_n \) and \( h \) can be written as:
\begin{align*}
    h_n(t) &= \max_{a \in S} V \left( U_n(t), a, \int W_n(t,s) f_n(s) d\mu(s) \right) - V \left( U_n(t), f_n(t), \int W_n(t,s) f_n(s) d\mu(s) \right),\\
    h(t) &= \max_{a \in S} V \left( U(t), a, \int W(t,s) f(s) d\mu(s) \right) - V\left( U(t), f(t), \int W(t,s) f(s) d\mu(s) \right).
\end{align*}
Note that \( V \) is continuous with respect to the weak topology of \( S \), and \( \max_{a \in S} V(\cdot,a,\cdot) : C \times S \to \mathbf{R} \) is also continuous with respect to the weak topology of \( S \) (see Claim \ref{claim:continuity of V} in the proof of Theorem \ref{theorem:limit is Nash}).

As in the proof of Theorem \ref{theorem:limit is Nash}, we have
\[
\int g_n(s) d\mu(s) \to \int g(s) d\mu(s)
\]
in the weak topology of \(S\). For completeness, we state this again here.
Since \( S \) is weakly compact, it is norm-bounded. Therefore, there exists some \( r > 0 \) such that \( \|f(s)\|, \|f_n(s)\| \leq r \) for all \( s \in (0,1] \) and \( n \in \mathbf{N} \). Moreover, since \( |W_n(t,s)|, |W(t,s)| \leq 1 \), we also have \( \|g(s)\|, \|g_n(s)\| \leq r \) for all \( s \in (0,1] \) and \( n \in \mathbf{N} \).
Since \( \|g_n(s) - g(s)\| \to 0 \) for all \( s \in (0,1] \), it follows from the dominated convergence theorem \citep[][Theorem 3, p.45]{RefWorks:RefID:16-diestel1977vector} that
\[
\int g_n(s) d\mu(s) \to \int g(s) d\mu(s)
\]
in the norm topology, and hence in the weak topology as well.

Since \( U_n(t) \to U(t) \), \( f_n(t) \to f(t) \) in the norm topology (and thus in the weak topology), and \( \int g_n \to \int g \) in the weak topology, it follows from the continuity of \( V \) and \( \max_{a \in S} V(\cdot,a,\cdot) \) that \( h_n(t) \to h(t) \) for all \( t \in (0,1] \). Since \( f \) is a Nash equilibrium of \( \mathcal{G} \), we have \( h(t) = 0 \) for a.e.\ \( t \in (0,1] \). Therefore, \( h_n \overset{a.e.}{\longrightarrow} 0 \)

Since almost everywhere convergence implies convergence in measure in finite measure spaces \citep[][Theorem 13.37]{RefWorks:RefID:7-aliprantis2006infinite}, it follows that \( h_n \to 0 \) in measure. Therefore, for any \( \epsilon > 0 \), there exists some \( N \in \mathbf{N} \) such that for all \( n > N \), it holds that
\[
\mu (\{ t \in (0,1] : h_n(t) > \epsilon \}) \leq \epsilon,
\]
which means
\[
\mu (\{ t \in (0,1] : h_n(t) \leq \epsilon \}) \geq 1 - \epsilon.
\]
Thus, we have \( \epsilon_n < \epsilon \) for all \( n > N \), which implies that
\[
\limsup_{n \to \infty} \epsilon_n \leq \epsilon.
\]
\end{proof}

\bibliographystyle{elsarticle-harv} 

\begin{thebibliography}{37}
\expandafter\ifx\csname natexlab\endcsname\relax\def\natexlab#1{#1}\fi
\providecommand{\url}[1]{\texttt{#1}}
\providecommand{\href}[2]{#2}
\providecommand{\path}[1]{#1}
\providecommand{\DOIprefix}{doi:}
\providecommand{\ArXivprefix}{arXiv:}
\providecommand{\URLprefix}{URL: }
\providecommand{\Pubmedprefix}{pmid:}
\providecommand{\doi}[1]{\href{http://dx.doi.org/#1}{\path{#1}}}
\providecommand{\Pubmed}[1]{\href{pmid:#1}{\path{#1}}}
\providecommand{\bibinfo}[2]{#2}
\ifx\xfnm\relax \def\xfnm[#1]{\unskip,\space#1}\fi
\bibitem[{Aliprantis and Border(2006)}]{RefWorks:RefID:7-aliprantis2006infinite}
\bibinfo{author}{Aliprantis, C.D.}, \bibinfo{author}{Border, K.C.}, \bibinfo{year}{2006}.
\newblock \bibinfo{title}{Infinite Dimensional Analysis: A Hitchhiker's Guide}.
\newblock \bibinfo{publisher}{Springer}.
\bibitem[{Borgs et~al.(2008)Borgs, Chayes, Lovász, Sós and Vesztergombi}]{RefWorks:RefID:410-borgs2008convergent}
\bibinfo{author}{Borgs, C.}, \bibinfo{author}{Chayes, J.T.}, \bibinfo{author}{Lovász, L.}, \bibinfo{author}{Sós, V.T.}, \bibinfo{author}{Vesztergombi, K.}, \bibinfo{year}{2008}.
\newblock Convergent sequences of dense graphs I: Subgraph frequencies, metric properties and testing.
\newblock \bibinfo{journal}{Advances in Mathematics} \bibinfo{volume}{219}, \bibinfo{pages}{1801--1851}.
\bibitem[{Borgs et~al.(2012)Borgs, Chayes, Lovász, Sós and Vesztergombi}]{RefWorks:RefID:411-borgs2012convergent}
\bibinfo{author}{Borgs, C.}, \bibinfo{author}{Chayes, J.T.}, \bibinfo{author}{Lovász, L.}, \bibinfo{author}{Sós, V.T.}, \bibinfo{author}{Vesztergombi, K.}, \bibinfo{year}{2012}.
\newblock Convergent sequences of dense graphs II. Multiway cuts and statistical physics.
\newblock \bibinfo{journal}{Annals of Mathematics} \bibinfo{volume}{176}, \bibinfo{pages}{151--219}.
\bibitem[{Bramoullé and Kranton(2016)}]{10.1093/oxfordhb/9780199948277.013.8}
\bibinfo{author}{Bramoullé, Y.}, \bibinfo{author}{Kranton, R.}, \bibinfo{year}{2016}.
\newblock {Games Played on Networks}, in: \bibinfo{booktitle}{{The Oxford Handbook of the Economics of Networks}}. \bibinfo{publisher}{Oxford University Press}.
\newblock \href{http://arxiv.org/abs/https://academic.oup.com/book/0/chapter/212012097/chapter-ag-pdf/44596635/book\_28058\_section\_212012097.ag.pdf}{{\tt arXiv:https://academic.oup.com/book/0/chapter/212012097/chapter-ag-pdf/44596635/book\_28058\_section\_212012097.ag.pdf}}.
\bibitem[{Carmona and Podczeck(2020)}]{RefWorks:RefID:122-carmona2020pure}
\bibinfo{author}{Carmona, G.}, \bibinfo{author}{Podczeck, K.}, \bibinfo{year}{2020}.
\newblock Pure strategy Nash equilibria of large finite-player games and their relationship to non-atomic games.
\newblock \bibinfo{journal}{Journal of Economic Theory} \bibinfo{volume}{187}, \bibinfo{pages}{105015}.
\bibitem[{Carmona and Podczeck(2021)}]{RefWorks:RefID:394-carmona2021strict}
\bibinfo{author}{Carmona, G.}, \bibinfo{author}{Podczeck, K.}, \bibinfo{year}{2021}.
\newblock Strict pure strategy Nash equilibria in large finite-player games.
\newblock \bibinfo{journal}{Theoretical Economics} \bibinfo{volume}{16}, \bibinfo{pages}{1055--1093}.
\bibitem[{Carmona and Podczeck(2022)}]{RefWorks:RefID:79-carmona2022approximation}
\bibinfo{author}{Carmona, G.}, \bibinfo{author}{Podczeck, K.}, \bibinfo{year}{2022}.
\newblock Approximation and characterization of Nash equilibria of large games.
\newblock \bibinfo{journal}{Economic Theory} \bibinfo{volume}{73}, \bibinfo{pages}{679--694}.
\bibitem[{Diestel and Uhl(1977)}]{RefWorks:RefID:16-diestel1977vector}
\bibinfo{author}{Diestel, J.}, \bibinfo{author}{Uhl, J.J.}, \bibinfo{year}{1977}.
\newblock \bibinfo{title}{Vector measures}.
\newblock \bibinfo{number}{no. 15}, \bibinfo{publisher}{American Mathematical Society}.
\bibitem[{Dunford and Schwartz(1988)}]{dunford1988linear}
\bibinfo{author}{Dunford, N.}, \bibinfo{author}{Schwartz, J.T.}, \bibinfo{year}{1988}.
\newblock \bibinfo{title}{Linear operators, part 1: general theory}. volume~\bibinfo{volume}{10}.
\newblock \bibinfo{publisher}{John Wiley \& Sons}.
\bibitem[{Green(1984)}]{Green1984continuum}
\bibinfo{author}{Green, E.J.}, \bibinfo{year}{1984}.
\newblock Continuum and Finite-Player Noncooperative Models of Competition.
\newblock \bibinfo{journal}{Econometrica} \bibinfo{volume}{52}, \bibinfo{pages}{975--993}.
\bibitem[{He et~al.(2017)He, Sun and Sun}]{RefWorks:RefID:414-he2017modeling}
\bibinfo{author}{He, W.}, \bibinfo{author}{Sun, X.}, \bibinfo{author}{Sun, Y.}, \bibinfo{year}{2017}.
\newblock Modeling infinitely many agents.
\newblock \bibinfo{journal}{Theoretical Economics} \bibinfo{volume}{12}, \bibinfo{pages}{771--815}.
\bibitem[{He and Sun(2022)}]{RefWorks:RefID:70-he2022conditional}
\bibinfo{author}{He, W.}, \bibinfo{author}{Sun, Y.}, \bibinfo{year}{2022}.
\newblock Conditional expectation of Banach valued correspondences and economic applications.
\newblock \bibinfo{journal}{Journal of Mathematical Economics} \bibinfo{volume}{101}, \bibinfo{pages}{102698}.
\bibitem[{Hildenbrand(1974)}]{Hildenbrand1977}
\bibinfo{author}{Hildenbrand, W.}, \bibinfo{year}{1974}.
\newblock \bibinfo{title}{Core and Equilibria of a Large Economy.}
\newblock \bibinfo{publisher}{Princeton University Press}.
\bibitem[{Housman(1988)}]{Hausman1988infinite}
\bibinfo{author}{Housman, D.}, \bibinfo{year}{1988}.
\newblock Infinite Player Noncooperative Games and the Continuity of the Nash Equilibrium Correspondence.
\newblock \bibinfo{journal}{Mathematics of Operations Research} \bibinfo{volume}{13}, \bibinfo{pages}{488--496}.
\bibitem[{Jackson and Zenou(2015)}]{RefWorks:RefID:159-jackson2015chapter}
\bibinfo{author}{Jackson, M.O.}, \bibinfo{author}{Zenou, Y.}, \bibinfo{year}{2015}.
\newblock Chapter 3 - Games on Networks.
\newblock \bibinfo{journal}{Handbook of Game Theory with Economic Applications} \bibinfo{volume}{4}, \bibinfo{pages}{95--163}.
\bibitem[{Janson(2013)}]{RefWorks:RefID:413-janson2013graphons}
\bibinfo{author}{Janson, S.}, \bibinfo{year}{2013}.
\newblock Graphons, cut norm and distance, couplings and rearrangements.
\newblock \bibinfo{journal}{New York journal of mathematics} .
\bibitem[{Kannai(1970)}]{RefWorks:RefID:265-kannai1970continuity}
\bibinfo{author}{Kannai, Y.}, \bibinfo{year}{1970}.
\newblock Continuity Properties of the Core of a Market.
\newblock \bibinfo{journal}{Econometrica} \bibinfo{volume}{38}, \bibinfo{pages}{791--815}.
\bibitem[{Keisler and Sun(2009)}]{RefWorks:RefID:28-keisler2009saturated}
\bibinfo{author}{Keisler, H.J.}, \bibinfo{author}{Sun, Y.}, \bibinfo{year}{2009}.
\newblock Why saturated probability spaces are necessary.
\newblock \bibinfo{journal}{Advances in Mathematics} \bibinfo{volume}{221}, \bibinfo{pages}{1584--1607}.
\bibitem[{Khan et~al.(2013a)Khan, Rath, Sun and Yu}]{RefWorks:RefID:273-khan2013large}
\bibinfo{author}{Khan, M.A.}, \bibinfo{author}{Rath, K.P.}, \bibinfo{author}{Sun, Y.}, \bibinfo{author}{Yu, H.}, \bibinfo{year}{2013}a.
\newblock Large games with a bio-social typology.
\newblock \bibinfo{journal}{Journal of Economic Theory} \bibinfo{volume}{148}, \bibinfo{pages}{1122--1149}.
\bibitem[{Khan et~al.(2013b)Khan, Rath, Yu and Zhang}]{RefWorks:RefID:424-khan2013large}
\bibinfo{author}{Khan, M.A.}, \bibinfo{author}{Rath, K.P.}, \bibinfo{author}{Yu, H.}, \bibinfo{author}{Zhang, Y.}, \bibinfo{year}{2013}b.
\newblock Large distributional games with traits.
\newblock \bibinfo{journal}{Economics Letters} \bibinfo{volume}{118}, \bibinfo{pages}{502--505}.
\bibitem[{Khan and Sun(1999)}]{RefWorks:RefID:423-khan1999non-cooperative}
\bibinfo{author}{Khan, M.A.}, \bibinfo{author}{Sun, Y.}, \bibinfo{year}{1999}.
\newblock Non-cooperative games on hyperfinite Loeb spaces.
\newblock \bibinfo{journal}{Journal of Mathematical Economics} \bibinfo{volume}{31}, \bibinfo{pages}{455--492}.
\bibitem[{Khan and Yannelis(1991)}]{RefWorks:RefID:56-khan1991equilibria}
\bibinfo{author}{Khan, M.A.}, \bibinfo{author}{Yannelis, N.C.}, \bibinfo{year}{1991}.
\newblock Equilibria in Markets with a Continuum of Agents and Commodities, in: \bibinfo{editor}{Khan, M.A.}, \bibinfo{editor}{Yannelis, N.C.} (Eds.), \bibinfo{booktitle}{Equilibrium Theory in Infinite Dimensional Spaces}. \bibinfo{publisher}{Springer Berlin Heidelberg}, \bibinfo{address}{Berlin, Heidelberg}, pp. \bibinfo{pages}{233--248}.
\bibitem[{Kunszenti-Kovács(2019)}]{RefWorks:RefID:416-kunszenti-kovács2019uniqueness}
\bibinfo{author}{Kunszenti-Kovács, D.}, \bibinfo{year}{2019}.
\newblock Uniqueness of Banach space valued graphons.
\newblock \bibinfo{journal}{Journal of Mathematical Analysis and Applications} \bibinfo{volume}{474}, \bibinfo{pages}{413--440}.
\bibitem[{Kunszenti-Kovács et~al.(2022)Kunszenti-Kovács, Lovász and Szegedy}]{RefWorks:RefID:415-kunszenti-kovács2022multigraph}
\bibinfo{author}{Kunszenti-Kovács, D.}, \bibinfo{author}{Lovász, L.}, \bibinfo{author}{Szegedy, B.}, \bibinfo{year}{2022}.
\newblock Multigraph limits, unbounded kernels, and Banach space decorated graphs.
\newblock \bibinfo{journal}{Journal of Functional Analysis} \bibinfo{volume}{282}, \bibinfo{pages}{109284}.
\bibitem[{Lovász(2012)}]{RefWorks:RefID:412-lovász2012large}
\bibinfo{author}{Lovász, L.}, \bibinfo{year}{2012}.
\newblock \bibinfo{title}{Large networks and graph limits}. volume~\bibinfo{volume}{60}.
\newblock \bibinfo{publisher}{American Mathematical Soc.}
\bibitem[{Lovász and Szegedy(2006)}]{RefWorks:RefID:409-lovász2006limits}
\bibinfo{author}{Lovász, L.}, \bibinfo{author}{Szegedy, B.}, \bibinfo{year}{2006}.
\newblock Limits of dense graph sequences.
\newblock \bibinfo{journal}{Journal of Combinatorial Theory, Series B} \bibinfo{volume}{96}, \bibinfo{pages}{933--957}.
\bibitem[{Mas-Colell(1984)}]{RefWorks:RefID:402-mas-colell1984theorem}
\bibinfo{author}{Mas-Colell, A.}, \bibinfo{year}{1984}.
\newblock On a theorem of Schmeidler.
\newblock \bibinfo{journal}{Journal of Mathematical Economics} \bibinfo{volume}{13}, \bibinfo{pages}{201--206}.
\bibitem[{Parise and Ozdaglar(2023)}]{parise2023}
\bibinfo{author}{Parise, F.}, \bibinfo{author}{Ozdaglar, A.}, \bibinfo{year}{2023}.
\newblock Graphon Games: A Statistical Framework for Network Games and Interventions.
\newblock \bibinfo{journal}{Econometrica} \bibinfo{volume}{91}, \bibinfo{pages}{191--225}.
\bibitem[{Qiao and Yu(2014)}]{RefWorks:RefID:281-qiao2014space}
\bibinfo{author}{Qiao, L.}, \bibinfo{author}{Yu, H.}, \bibinfo{year}{2014}.
\newblock On the space of players in idealized limit games.
\newblock \bibinfo{journal}{Journal of Economic Theory} \bibinfo{volume}{153}, \bibinfo{pages}{177--190}.
\bibitem[{Qiao et~al.(2016)Qiao, Yu and Zhang}]{RefWorks:RefID:279-qiao2016closed-graph}
\bibinfo{author}{Qiao, L.}, \bibinfo{author}{Yu, H.}, \bibinfo{author}{Zhang, Z.}, \bibinfo{year}{2016}.
\newblock On the closed-graph property of the Nash equilibrium correspondence in a large game: A complete characterization.
\newblock \bibinfo{journal}{Games and Economic Behavior} \bibinfo{volume}{99}, \bibinfo{pages}{89--98}.
\bibitem[{Rokade and Parise(2023)}]{Rokade2023}
\bibinfo{author}{Rokade, K.}, \bibinfo{author}{Parise, F.}, \bibinfo{year}{2023}.
\newblock Graphon Games with Multiple Nash Equilibria: Analysis and Computation. Available at SSRN.
\newblock \URLprefix \url{https://ssrn.com/abstract=4354931}, \DOIprefix\doi{http://dx.doi.org/10.2139/ssrn.4354931}.
\bibitem[{Rudin(1987)}]{RefWorks:RefID:13-rudin1987real}
\bibinfo{author}{Rudin, W.}, \bibinfo{year}{1987}.
\newblock \bibinfo{title}{Real and complex analysis}.
\newblock \bibinfo{number}{. Mathematics series}, \bibinfo{publisher}{McGraw-Hill}.
\bibitem[{Schmeidler(1973)}]{RefWorks:RefID:419-schmeidler1973equilibrium}
\bibinfo{author}{Schmeidler, D.}, \bibinfo{year}{1973}.
\newblock Equilibrium points of nonatomic games.
\newblock \bibinfo{journal}{Journal of Statistical Physics} \bibinfo{volume}{7}, \bibinfo{pages}{295--300}.
\bibitem[{Sun and Zhang(2015)}]{RefWorks:RefID:274-sun2015pure-strategy}
\bibinfo{author}{Sun, X.}, \bibinfo{author}{Zhang, Y.}, \bibinfo{year}{2015}.
\newblock Pure-strategy Nash equilibria in nonatomic games with infinite-dimensional action spaces.
\newblock \bibinfo{journal}{Economic Theory} \bibinfo{volume}{58}, \bibinfo{pages}{161--182}.
\bibitem[{Taylor and Lay(1986)}]{RefWorks:RefID:15-taylor1986introduction}
\bibinfo{author}{Taylor, A.E.}, \bibinfo{author}{Lay, D.C.}, \bibinfo{year}{1986}.
\newblock \bibinfo{title}{Introduction to functional analysis}.
\newblock \bibinfo{edition}{2nd ed., reprint ed} ed., \bibinfo{publisher}{R.E. Krieger Pub. Co.}
\bibitem[{Wu(2022)}]{RefWorks:RefID:393-wu2022pure-strategy}
\bibinfo{author}{Wu, B.}, \bibinfo{year}{2022}.
\newblock On pure-strategy Nash equilibria in large games.
\newblock \bibinfo{journal}{Games and Economic Behavior} \bibinfo{volume}{132}, \bibinfo{pages}{305--315}.
\bibitem[{Yannelis(1991)}]{Yannelis1991}
\bibinfo{author}{Yannelis, N.C.}, \bibinfo{year}{1991}.
\newblock Integration of Banach-Valued Correspondence, in: \bibinfo{editor}{Khan, M.A.}, \bibinfo{editor}{Yannelis, N.C.} (Eds.), \bibinfo{booktitle}{Equilibrium Theory in Infinite Dimensional Spaces}. \bibinfo{publisher}{Springer Berlin Heidelberg}, \bibinfo{address}{Berlin, Heidelberg}, pp. \bibinfo{pages}{2--35}.

\end{thebibliography}





\end{document}